\newtheorem{theorem}{Theorem}
\newtheorem{lemma}{Lemma}
\newtheorem{cor}{Corollary}
\newtheorem{open}{Open Problem}
\newcommand{\df}[1]{{\it #1}}
\newcommand{\Oh}{{\ensuremath{\mathcal{O}}}}
\newcommand{\Sh}{{\ensuremath{\mathcal{S}}}}
\newcommand{\Qh}{{\ensuremath{\mathcal{Q}}}}
\DeclareMathOperator{\sn}{sn}
\DeclareMathOperator{\dsn}{dsn}
\DeclareMathOperator{\qn}{qn}
\DeclareMathOperator{\sqn}{sqn}
\DeclareMathOperator{\pw}{pw}
\DeclareMathOperator{\tw}{tw}
\newcommand{\cbox}{%
	\text{\;\fboxsep=-.2pt\fbox{\rule{0pt}{1.5ex}\rule{1.5ex}{0pt}}\;}%
}
\newcounter{dummycount}
\newcommand{\wormholeThm}[1]{
	\newcounter{#1}
	\setcounter{#1}{\value{theorem}}}
\newenvironment{backInTimeThm}[1]{
	\setcounter{dummycount}{\value{theorem}}
	\setcounter{theorem}{\value{#1}}}
{\setcounter{theorem}{\value{dummycount}}}
\begin{document}

\title{Book Embeddings of Graph Products}

\author{
	Sergey Pupyrev \\
	spupyrev@gmail.com
}

\date{}
\maketitle

\begin{abstract}
A $k$-stack layout (also called a $k$-page book embedding) of a graph consists of a total order of the vertices, 
and a partition of the edges into $k$ sets of non-crossing edges with respect to the vertex order.
The stack number (book thickness, page number) of a
graph is the minimum $k$ such that it admits a $k$-stack layout.
A $k$-queue layout is defined similarly, except that no two edges in a single set may be nested.

It was recently proved that graphs of various non-minor-closed classes
are subgraphs of the strong product of a path and a graph with bounded treewidth. Motivated by
this decomposition result, we explore stack layouts of graph products. We show that the
stack number is bounded for the strong product of a path and 
(i) a graph of bounded pathwidth or (ii) a bipartite graph of bounded treewidth and bounded degree.
The results are obtained via a novel concept of simultaneous stack-queue layouts, which
may be of independent interest.
\end{abstract}

\section{Introduction}
\label{sect:intro}

Embedding graphs in books is a fundamental problem in graph theory, which has
been the subject of intense research since their introduction in 70s by Ollmann~\cite{Oll73}.
A \df{book embedding} (also known as a \df{stack layout}) of a graph $G=(V, E)$
consists of a total order, $\sigma$, of $V$ and an assignment of the
edges to \df{stacks} (\df{pages}), such that no two edges in a single stack \df{cross};
that is, there are no edges $(u, v)$ and $(x, y)$ in a stack with $u <_{\sigma} x <_{\sigma} v <_{\sigma} y$.
The minimum number of pages needed for a book embedding of a graph $G$
is called its \df{stack~number} (or \df{book thickness} or \df{page~number}) and denoted by $\sn(G)$.

Book embeddings have been extensively studied for various families of graphs. 
In particular, the graphs with stack number one are precisely the outerplanar graphs, while the graphs with 
stack number at most two are the subgraphs of planar Hamiltonian graphs~\cite{BK79}.
The stack number of planar graphs is four~\cite{four}, 
graphs of genus $g$ have stack number $\Oh(\sqrt g)$~\cite{Mal94}, while for graphs of
treewidth $tw$, it is at most $tw+1$~\cite{GH01}.
More generally, all proper minor-closed graph families have a bounded stack number~\cite{Bla03}.
Non-minor-closed classes of graphs have also been investigated.
Bekos et al. proved that $1$-planar graphs have bounded stack number~\cite{BBKR17}. 
Recall that a graph is $k$-planar if it
can be drawn in the plane with at most $k$ crossings per edge.
Recently the result has been generalized to
a wider family of $k$-framed graphs that admit a drawing with a planar skeleton, 
whose faces have degree at most $k \ge 3$ and whose crossing edges are in the interiors of the faces~\cite{BDGGMR20}.
In general however, the best-known upper bound on the stack number of $k$-planar graphs is $\Oh(\log n)$~\cite{DF18}.

\begin{figure}[!tb]
	\begin{subfigure}[b]{0.3\textwidth}
		\centering
		\includegraphics[page=3,width=0.8\textwidth]{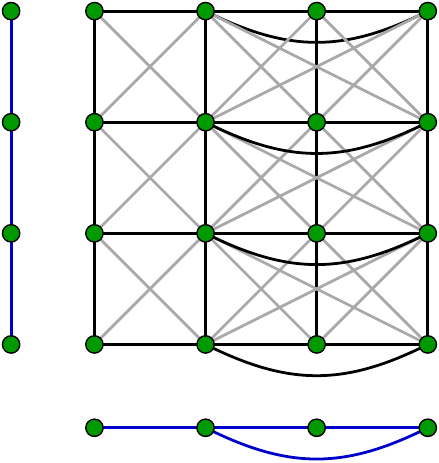}
		\caption{$P_4 \cbox G$}
	\end{subfigure}
	\hfill
	\begin{subfigure}[b]{0.3\linewidth}
		\centering
		\includegraphics[page=2,width=0.8\textwidth]{products}
		\caption{$P_4 \times G$}
	\end{subfigure}
	\hfill
	\begin{subfigure}[b]{0.3\linewidth}
		\centering
		\includegraphics[page=1,width=0.8\textwidth]{products}
		\caption{$P_4 \boxtimes G$}
	\end{subfigure}
	\caption{Examples of graph products of a path, $P_4$, and a cycle with an edge,~~$G$: 
		(a)~cartesian, (b)~direct, (c)~strong.}
	\label{fig:products}
\end{figure}

We suggest to attack the problem of determining book thickness of non-planar
graphs using graph products. Formally, let $A$ and $B$ be graphs. A product of $A$ and $B$ is a graph
defined on a vertex set
$$
V(A) \times V(B) = \{(v, x) : v \in V(A), x \in V(B)\}).
$$
A potential edge, $(v, x), (u, y) \in V(A) \times V(B)$, can be classified as follows:
\begin{itemize}
	\item $A$-edge: $v = u$ and $(x, y) \in E(B)$, or
	\item $B$-edge: $x = y$ and $(v, u) \in E(A)$, or
	\item $direct$-edge: $(v, u) \in E(A)$ and $(x, y) \in E(B)$.
\end{itemize}	

\noindent
The \df{cartesian product} of $A$ and $B$, denoted by $A \cbox B$, consists of $A$-edges and $B$-edges.
The \df{direct product} of $A$ and $B$, denoted by $A \times B$, consists of direct edges.
The \df{strong product} of $A$ and $B$, denoted by $A \boxtimes B$, consists of $A$-edges, $B$-edges, 
and direct-edges. 
\cref{fig:products} illustrates examples of the defined
graph products. 
Notice that all the products are symmetric. In this paper, we study stack layouts of strong
products of a path and a bounded-treewidth graph (refer to \cref{sect:prel} for a definition), 
focusing primarily on the following question:

\begin{open}
	\label{open:stack_strong_product}
	Is stack number of $P_n \boxtimes G$, where $P_n$ is a path and $G$ is a graph of treewidth $tw \ge 1$, 
	bounded by $f(tw)$ for some function $f$?
\end{open}

Our motivation for studying stack layouts of graph products comes from a recent development 
of decomposition theorems for planar and beyond-planar graphs~\cite{PS19,DJMMUW19,DMW20}. 
Dujmovi{\'c}, Morin, and Wood~\cite{DMW20} recently show the following:

\begin{lemma}[\cite{DMW20}]
	\label{lm:kplanar_decom}
	Every $k$-planar graph is a subgraph of the strong product of a path and a graph of treewidth $\Oh(k^5)$.
\end{lemma}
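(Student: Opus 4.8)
The plan is to reduce to the product structure theorem for planar graphs. Fix a drawing of the $k$-planar graph $G$ in the plane with at most $k$ crossings per edge; by local rerouting one may assume the drawing is \emph{good}, i.e.\ no edge crosses itself, adjacent edges do not cross, and every two edges cross at most once (such cleanups only decrease the number of crossings per edge). Let $\widehat G$ be the \emph{planarization} of this drawing: put a dummy vertex at each crossing point and subdivide the two edges passing through it. Then $\widehat G$ is planar, $V(G)\subseteq V(\widehat G)$, and each edge $uv\in E(G)$ is replaced by a path $P_{uv}$ in $\widehat G$ from $u$ to $v$ whose internal vertices are exactly the (at most $k$) dummy vertices lying on $uv$. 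Two features will be essential: every dummy vertex is internal to exactly two of the paths $P_{uv}$ (the two edges that cross there), and every original vertex is only an endpoint, never an internal vertex, of these paths. In other words, $G$ is obtained from the planar graph $\widehat G$ by \emph{shortcutting} a family of paths, each of length at most $k+1$ and of internal congestion $2$.

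Next I would invoke the planar graph product structure theorem~\cite{DJMMUW19} for $\widehat G$: there is a path $Q$ and a graph $\widehat H$ with $\tw(\widehat H)\le 8$ such that $\widehat G$ embeds as a subgraph of $\widehat H\boxtimes Q$. Writing the image of $w\in V(\widehat G)$ as $(\phi(w),\lambda(w))$, the map $\lambda$ is a layering of $\widehat G$ (the endpoints of an edge lie in equal or consecutive layers of $Q$), and along any edge $ww'$ of $\widehat G$ the vertices $\phi(w)$ and $\phi(w')$ of $\widehat H$ are equal or adjacent.

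It then remains to transfer this structure to $G$, and here two defects must be repaired. First, $\lambda$ is in general not a layering of $G$: following $P_{uv}$, the layer index may drift by up to $k+1$, so the endpoints $u,v$ of an edge of $G$ can be that far apart. The fix is to \emph{coarsen} the layering, merging every block of $k+1$ consecutive layers of $Q$ into one; an edge of $G$ then spans at most two coarsened layers, giving a valid layering of $G$ along a path $Q'$, at the cost of a $\Theta(k)$ blow-up in treewidth (reconstituting a strong-product host over $Q'$ multiplies the old host by a factor of order $k$). The second and principal difficulty is to build a partition of $V(G)$ whose quotient $H$ has treewidth $\Oh(k^5)$: an edge $uv$ of $G$ corresponds in $\widehat G$ not to an edge but to a \emph{walk} of length at most $k+1$ through the closed neighbourhoods of $\widehat H$, so contracting the paths $P_{uv}$ introduces long-range adjacencies into the quotient. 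This is where I expect the real obstacle to lie. The naive idea --- replace $\widehat H$ by the graph on $V(\widehat H)$ in which two vertices are joined whenever they lie at distance at most $k+1$ --- fails, because a bounded-treewidth graph can have an unbounded-treewidth square (e.g.\ a star). One must instead exploit that the shortcut family has internal congestion only $2$ \emph{in $\widehat G$ itself}, not merely in its quotient: roughly, each new part of the partition is defined as a carefully bounded union of old parts, chosen so that every short walk of $\widehat H$ used by some $P_{uv}$ collapses to a single edge of $H$, and then $\tw(H)$ is bounded by a polynomial in $\tw(\widehat H)$, in the path length $k+1$, and in the congestion $2$. Carrying out this accounting --- with $\tw(\widehat H)=\Oh(1)$, path length $\Oh(k)$, congestion $2$, and the $\Theta(k)$ layer-coarsening factor --- would yield $\tw(H)=\Oh(k^5)$ together with a path $Q'$ such that $G\subseteq H\boxtimes Q'$, as required. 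The delicate points are the precise definition of the new parts, the verification that they are compatible with the coarsened layering, and the accounting that keeps the treewidth polynomial (of degree five) in $k$ rather than exponential.
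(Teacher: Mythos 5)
First, note that the paper does not prove \cref{lm:kplanar_decom} at all: it is imported verbatim from~\cite{DMW20}, so there is no in-paper argument to measure your attempt against. Measured instead against the actual proof in~\cite{DMW20}, your sketch has the right skeleton --- planarize the good drawing, apply the product structure theorem of~\cite{DJMMUW19} to the planarization, coarsen the layering by a factor of $k+1$, and then repair the quotient so that each crossing path collapses to a single edge of the new host. Your identification of the two invariants that make this possible (each dummy path has at most $k$ internal vertices, and each dummy vertex is internal to exactly two paths) is exactly the ``$(k+1,2)$-shortcut system'' abstraction that~\cite{DMW20} isolates.

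The genuine gap is that the step you yourself flag as ``the principal difficulty'' is the entire theorem, and you do not carry it out. Everything before it (planarization, invoking~\cite{DJMMUW19}, layer coarsening) is routine; the content of the result is precisely the lemma that a graph obtained from a subgraph of $H\boxtimes Q$ by shortcutting a family of paths of length at most $k+1$ and internal congestion $2$ is again a subgraph of $H'\boxtimes Q'$ with $\tw(H')$ polynomial in $k$ and $\tw(H)$. You correctly observe that the naive closure under bounded distance fails (the square of a star), but your replacement --- ``each new part is a carefully bounded union of old parts, chosen so that every short walk collapses to a single edge'' --- is a description of the desired conclusion, not a construction. Missing are: the actual definition of the new parts (in~\cite{DMW20} this is done by rooting a tree decomposition of $H$ and grouping each dummy path with a designated endpoint, then bounding how many old cells a new part can touch using the congestion bound), the verification that the resulting quotient is well-defined and compatible with the coarsened layering, and the counting that produces a degree-five polynomial rather than something exponential in $k$. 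Asserting that ``carrying out this accounting would yield $\tw(H)=\Oh(k^5)$'' leaves the exponent, and indeed the finiteness of the bound, unjustified. As a proof the proposal is therefore incomplete; as a roadmap to the proof in~\cite{DMW20} it is accurate.
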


Notice that \cref{lm:kplanar_decom} together with an affirmative answer to \cref{open:stack_strong_product}
would provide a constant stack number for all $k$-planar graphs, thus resolving a long-standing 
open problem listed in a recent survey on graph drawing of beyond-planar graphs~\cite{DLM19}.
Furthermore, a similar decomposition exists for other classes of non-minor-closed families of graphs, such as 
map graphs, string graphs, graph powers, and nearest neighbor graphs, whose
stack number is not known to be bounded by a constant; refer to \cite{DMW20} for exact definitions.
Interestingly, a negative answer to \cref{open:stack_strong_product} would resolve another 
question in the context of queue layouts that remains unsolved for more than thirty years.

A \df{queue layout} is a ``dual'' concept of a stack layout. For a graph $G=(V, E)$, it consists 
of a total order, $\sigma$, of $V$ and an assignment of the
edges to \df{queues}, such that no two edges in a single queue \df{nest};
that is, there are no edges $(u, v)$ and $(x, y)$ in a queue with $u <_{\sigma} x <_{\sigma} y <_{\sigma} v$.
The minimum number of queues needed in a queue layout of a graph
is called its \df{queue number} and denoted by~$\qn(G)$~\cite{HR92}.
As with stack layouts, the queue number is known to be bounded for many classes of graphs, 
including planar graphs~\cite{DJMMUW19}, graphs with bounded treewidth~\cite{DMW05,Wie17}, and all proper minor-closed 
classes of graphs~\cite{DJMMUW19,DMW20}.
Queue layouts have been introduced by Heath, Leighton, and Rosenberg~\cite{HR92,HLR92}, who tried to 
measure the power of stacks and queues to represent a given graph. Despite a wealth of research on the topic,
a fundamental question of what is more ``powerful'' remains unanswered. That is, Heath et al.~\cite{HLR92} ask
whether the stack number of a graph is bounded by a function of its queue number, and whether the queue number
of a graph is bounded by a function of its stack number.
In a study of queue layouts of graph products, Wood~\cite{W05} shows that 
for a path $P_n$ and all graphs $G$, $\qn(P_n \boxtimes G) \le 3 \qn(G) + 1$.
This result together with a negative answer to \cref{open:stack_strong_product} would provide an example
of a graph (namely, the strong product of a path and a bounded-treewidth graph) that has a constant queue
number but an unbounded stack number; thus, resolving one direction of the question posed by Heath et al.~\cite{HLR92}.

\subsection*{Results and Organization}
In this paper we introduce and initiate an investigation of \cref{open:stack_strong_product}. 
Our contribution is twofold. Firstly, we resolve the problem in affirmative for two subclasses
of bounded-treewidth graphs. Secondly, we provide an evidence that the most ``natural'' approach
cannot lead to a positive answer of the problem.

\paragraph{Positive Results.}
It is easy to verify that the stack number of $P_n \boxtimes G$ is bounded by a constant when
$G$ is a ``simple'' graph such as a path, a star, or a cycle. Notice that the strong graph product consists of
$n$ \df{copies} of $G$, which are connected by \df{inter-copy} edges. A {\it natural approach} is to layout each
copy independently using a constant number of stacks and then join individual results into
a final stack layout. In order to be able to embed inter-copy edges in a few stacks, one has to 
alternate direct and reverse vertex orders for the copies of $G$; refer to \cref{fig:pathpath_stacks} for the process of
embedding $P_n \boxtimes P_m$ in four stacks.

\begin{figure}[!tb]
	\begin{subfigure}[b]{0.31\textwidth}
		\centering
		\includegraphics[page=1,width=1\textwidth]{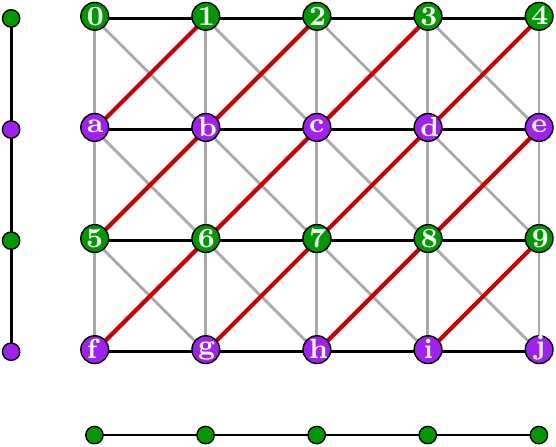}
		\caption{$P_4 \boxtimes P_5$}
	\end{subfigure}
	\hfill
	\begin{subfigure}[b]{0.65\linewidth}
		\centering
		\includegraphics[page=2,width=1\textwidth]{pathpath_stacks}
		\caption{$4$-stack layout}
	\end{subfigure}
	\caption{The strong product of $P_4 \boxtimes P_5$ and its $4$-stack layout. 
		The layout process is easily extendable for $P_n \boxtimes P_m$ with arbitrary values of $n$ and $m$.}
	\label{fig:pathpath_stacks}
\end{figure}

The above technique can be generalized using the concept of simultaneous stack-queue layouts.
Let $\sigma$ be a total order of $V$ for a graph $G=(V, E)$. A \df{simultaneous $s$-stack $q$-queue}
layout consists of $\sigma$ together with (i)~a partition of $E$ into $s$ stacks with respect to $\sigma$, 
and (ii)~a partition of $E$ into $q$ queues with respect to $\sigma$. In such a layout every edge of $G$
is associated with a stack and with a queue.
We stress the difference with so-called \df{mixed} layouts in which an edge belongs to a stack {\it or} to a queue~\cite{P17}.

In order to state the first main result of the paper, we use \df{dispersable}
book embeddings in which the graphs induced by the edges of each page are $1$-regular; see \cref{fig:matching}.
The minimum number of pages needed in a dispersable book embedding of $G$ is called its 
\df{dispersable stack number}, denoted $\dsn(G)$; it is also known as \df{matching book thickness}~\cite{BK79,ABGKP18}.

\wormholeThm{thm-product}
\begin{theorem}
	\label{thm:sn_product}
	Let $H$ be a bipartite graph and $G$ be a graph that admits a simultaneous $s$-stack $q$-queue layout.
	Then
	\begin{compactenum}[(i)]
		\item $\sn(H \cbox G) \le s + \dsn(H)$,
		\item $\sn(H \times G) \le 2 q \cdot \dsn(H)$,
		\item $\sn(H \boxtimes G) \le 2 q \cdot \dsn(H) + s + \dsn(H)$.
	\end{compactenum}
\end{theorem}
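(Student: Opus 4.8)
The plan is to embed all three products with a single vertex order, routing the three kinds of edges onto disjoint groups of pages. Fix a bipartition $V(H)=X\cup Y$, a dispersable book embedding of $H$ with vertex order $\sigma_H$ and pages $M_1,\dots,M_{\dsn(H)}$, each a matching, and a simultaneous $s$-stack $q$-queue layout of $G$ with vertex order $\sigma_G$, stacks $S_1,\dots,S_s$ and queues $Q_1,\dots,Q_q$. View the product as $|V(H)|$ copies of $G$ indexed by $V(H)$. Order $V(H)\times V(G)$ so that the copy over $v$ precedes the copy over $u$ whenever $v<_{\sigma_H}u$, and inside the copy over $v$ use $\sigma_G$ if $v\in X$ and the reverse of $\sigma_G$ if $v\in Y$. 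Each copy then occupies a contiguous block of the order, and along any edge $vu\in E(H)$ the two incident copies carry opposite internal orders. Since $E(H\boxtimes G)=E(H\cbox G)\cup E(H\times G)$ and this splits into three disjoint classes --- edges inside one copy, perfect matchings between two copies over an edge $vu\in E(H)$, and diagonals between two copies over an edge $vu\in E(H)$ --- it suffices to page each class separately.

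Within-copy edges: inside a block the induced order is $\sigma_G$ or its reverse, and reversing a vertex order preserves a book embedding, so each copy embeds on the $s$ pages $S_1,\dots,S_s$; distinct copies lie in disjoint blocks, so these same $s$ pages serve all of them. Matching edges over an $H$-edge: fix a page $M_t$ and $vu\in M_t$; the opposite orientations place $(v,x)$ and $(u,x)$ at mirror-image positions in their blocks, so the $|V(G)|$ edges $\{(v,x),(u,x)\}$, $x\in V(G)$, are pairwise nested and hence non-crossing. For two disjoint edges of $M_t$ --- which, lying on a common page of the book embedding of $H$, are separated or nested in $\sigma_H$ --- the corresponding bundles occupy disjoint or nested ranges of blocks, so they too are non-crossing. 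Hence all matching edges from $M_t$ fit on one page, for $\dsn(H)$ pages in total, and combining with the previous group proves~(i). (Only the matching property of the pages of $H$ is used here, but the $2$-colouring is kept for the next part.)

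Diagonal edges over an $H$-edge: for $vu\in E(H)$ and $xy\in E(G)$ with $x<_{\sigma_G}y$, call $\{(v,x),(u,y)\}$ an $\alpha$-edge and $\{(v,y),(u,x)\}$ a $\beta$-edge. Because $H$ is bipartite the copies over $v$ and over $u$ have opposite internal orders, and computing the four positions shows that two $\alpha$-edges arising from the same ordered pair of copies cross exactly when their underlying $G$-edges are nested in $\sigma_G$; so $\alpha$-edges whose $G$-edges all lie in one queue $Q_r$ and whose $H$-edges all lie in one matching page $M_t$ are pairwise non-crossing inside each pair of copies, and across two disjoint edges of $M_t$ they are non-crossing by the block-range argument above. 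Thus all such $\alpha$-edges share a page, and likewise all such $\beta$-edges; summing over the $q$ queues, the two classes, and the $\dsn(H)$ pages of $H$ yields $2q\,\dsn(H)$ pages, proving~(ii). Part~(iii) is then the union of this with the two groups of~(i), since the three edge classes are disjoint.

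The crux is the diagonal count: once the bipartite ``opposite orientations'' device reduces every $H$-edge to a pair of oppositely-oriented copies, one still has to see that the matching structure of the dispersable pages of $H$ meshes with the queue structure of $G$ --- precisely the claim that same-queue $G$-edges yield non-crossing $\alpha$-edges. I expect the position bookkeeping behind that claim, and checking that it (and the analogous nesting computation for the matching edges) is robust under all the reversals, to be the only genuinely delicate point; the interaction across distinct edges of a matching page is then immediate from the non-crossing property of the book embedding of $H$.
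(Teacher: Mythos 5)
Your proposal is correct and follows essentially the same route as the paper: the identical vertex order (copies of $G$ ordered by the dispersable layout of $H$, with $\sigma_G$ reversed on one side of the bipartition), the same three-way split of the edges, and the same grouping of the direct edges by matching page of $H$, queue of $G$, and orientation, with crossings ruled out because they would force two same-queue edges of $G$ to nest. Your $\alpha$/$\beta$ classes correspond exactly to the paper's stacks $\Sh_i^{2j}$ and $\Sh_i^{2j+1}$.
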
	

What graphs admit simultaneous layouts for constant $s$ and $q$? We prove that 
graphs of bounded \df{pathwidth} (see \cref{sect:prel} for a definition) have such a layout.
Although it is known that both the stack number and the queue number of pathwidth-$p$ graphs is at most 
$p$~\cite{TY02,DMW05}, \cref{lm:pw_sim} (in \cref{sect:main})
shows that the bounds can be achieved using a common vertex order.
As a direct corollary of the lemma, \cref{thm:sn_product}, and an 
observation that $\dsn(P_n) = 2$, we get the following result.\footnote{Very recently, 
Dujmovi{\'c}, Morin, and Yelle~\cite{DMY20} independently proved 
a result asymptotically equivalent to \cref{cor:1}; see \cref{sect:related} for a discussion.}

\begin{cor}
	\label{cor:1}
	Let $G$ be a graph of pathwidth $p$. Then $\sn(P_n \boxtimes G) \le 5p + 2$.
\end{cor}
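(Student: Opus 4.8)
The plan is to derive \cref{cor:1} as an immediate consequence of the machinery already assembled in the excerpt, so the bulk of the work is isolating and proving the missing ingredient, \cref{lm:pw_sim}, which asserts that a graph $G$ of pathwidth $p$ admits a simultaneous $p$-stack $p$-queue layout. Granting that, the corollary follows by instantiating \cref{thm:sn_product}(iii) with $H = P_n$ (which is bipartite) and $G$ having $s = q = p$: we obtain $\sn(P_n \boxtimes G) \le 2q\cdot\dsn(P_n) + s + \dsn(P_n) = 2p\cdot 2 + p + 2 = 5p + 2$, using the observation $\dsn(P_n) = 2$. So I would first state and dispatch the easy arithmetic, then spend the real effort on the simultaneous layout.

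For \cref{lm:pw_sim}, I would start from a path decomposition of $G$ of width $p$, i.e.\ a sequence of bags $B_1, \dots, B_\ell$ each of size at most $p+1$, with the usual interpolation property. The candidate common vertex order $\sigma$ is the standard one used in both the stack bound of Togasaki--Yamazaki~\cite{TY02} and the queue bound of Dujmovi\'c--Morin--Wood~\cite{DMW05}: order the vertices by the index of the bag in which they first appear, breaking ties arbitrarily within a bag. The key structural fact is that any edge $uv$ of $G$ lies in a common bag, so if $u <_\sigma v$ then every vertex of $\sigma$ strictly between $u$ and $v$ must be present in that bag together with $u$ (because it was introduced after $u$ but the bag containing $uv$ still contains $u$, forcing — via connectivity of the occurrence interval — that $u$ is ``active'' throughout). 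This bounds the number of vertices lying strictly inside any edge's span that are themselves endpoints of the relevant edge-structure, which is exactly what is needed to simultaneously color the edges into $p$ stacks and $p$ queues. Concretely, I would assign to each edge $uv$ (with $u <_\sigma v$) the ``height'' of $u$ within the bag that witnesses $uv$ — its rank among the at most $p+1$ vertices of that bag — and argue that edges sharing a height cannot cross (giving the $p$-stack partition) while a parallel argument on a slightly different statistic gives that edges sharing a label cannot nest (the $p$-queue partition). The point of \cref{lm:pw_sim} over just quoting~\cite{TY02,DMW05} is precisely that one single $\sigma$ supports both partitions; so the crux is checking that the \emph{same} order works, which it does because both classical proofs already use this canonical order.

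The main obstacle I anticipate is getting the constants exactly right in \cref{lm:pw_sim} so that they feed cleanly into $5p+2$: the classical results give stack number $\le p$ and queue number $\le p$ for pathwidth $p$, but a naive simultaneous argument might lose an additive constant (e.g.\ $p+1$ stacks or queues), which would degrade the corollary to $5p+O(1)$ with a worse constant. Resolving this requires care about whether ``pathwidth $p$'' means maximum bag size $p$ or $p+1$, and about the tie-breaking within bags — one wants to charge each edge to the endpoint that appeared \emph{later}, or argue about the vertex that is about to be forgotten, to shave the off-by-one. A secondary, purely bookkeeping obstacle is confirming $\dsn(P_n) = 2$: a path on $\ge 2$ vertices is not $1$-regular, so one page does not suffice for a dispersable embedding, but two pages do — put the vertices in path order and alternate edges between the two pages, so each page carries a perfect-or-near-perfect matching of non-crossing edges. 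Both of these are routine once flagged, so the proof is short modulo \cref{lm:pw_sim}, whose statement and proof presumably appear in \cref{sect:main}; here I would simply cite it and perform the substitution into \cref{thm:sn_product}(iii).
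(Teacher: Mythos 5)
Your derivation of \cref{cor:1} itself is exactly the paper's: set $H=P_n$, $s=q=p$ in \cref{thm:sn_product}(iii), use $\dsn(P_n)=2$, and compute $2p\cdot 2+p+2=5p+2$. That part, including the justification of $\dsn(P_n)=2$ by alternating the path's edges over two pages, is correct and needs no further comment.

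The gap is in your sketch of \cref{lm:pw_sim}, which you correctly identify as the real content. Two problems. First, your central claim --- that the same canonical ``first-bag'' order already underlies both classical proofs, so the simultaneity is essentially free --- is precisely the point at issue, and the paper explicitly asserts the opposite (``the existing proofs do not utilize the same vertex order for the stack and queue layouts''); you cannot discharge the crux by asserting it. Second, neither half of your argument is actually carried out: the claim that edges sharing a ``height'' in their witnessing bag cannot cross is nontrivial (two edges can be witnessed by different bags, so the ranks are not measured against a common set, and naively you get $p+1$ labels rather than $p$), and the queue half is left as ``a parallel argument on a slightly different statistic,'' which is not an argument. The paper's proof avoids both issues by working with a vertex order realizing the vertex separation number $p$: a $(p{+}1)$-rainbow would put $p+1$ vertices into a single vertex cut, giving the $p$-queue bound directly; and the $p$-stack bound is obtained by a greedy assignment of the forward-edge sets $E^i$, maintaining the invariant that each $E^j$ occupies one stack and using $|C(v_{i+1})\setminus\{v_i\}|\le p-1$ to show a free stack always exists. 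If you want your version to stand, you must either prove your height/statistic claims or switch to the separation-number order and make the two bounds explicit as above.
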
	

Notice that \cref{cor:1} combined with \cref{lm:kplanar_decom} implies an alternative proof of 
the $\Oh(\log n)$ upper bound for the stack number of
$k$-planar graphs, since for every graph $G$, $\pw(G) \in \Oh(\tw(G) \cdot \log n)$~\cite{Bod98}.

Another corollary of \cref{thm:sn_product} affirmatively resolves \cref{open:stack_strong_product} 
for the strong product of a path and a bounded-treewidth bipartite graph of bounded maximum vertex degree.
For that case we bound the dispersable stack number of a bipartite graph by a function of its
treewidth and the maximum vertex degree; see \cref{lm:disp} in \cref{sect:main}.

\begin{cor}
	\label{cor:2}
	Let $G$ be a bipartite graph of treewidth $tw$ with maximum vertex degree $\Delta$. 
	Then $\sn(P_n \boxtimes G) \le 3 (tw + 1) \Delta + 1$.
\end{cor}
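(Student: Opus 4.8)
The plan is to derive \cref{cor:2} as a direct consequence of \cref{thm:sn_product}(iii), so the real work is to supply the two missing ingredients and then plug in numbers. Setting $H = G$ (the bipartite graph of treewidth $tw$ and maximum degree $\Delta$) and letting the ``$G$'' in \cref{thm:sn_product} be a path $P_n$, the strong product $P_n \boxtimes G$ equals $G \boxtimes P_n$ by symmetry of the product, and \cref{thm:sn_product}(iii) gives $\sn(P_n \boxtimes G) \le 2q\cdot\dsn(G) + s + \dsn(G)$ where $(s,q)$ are the parameters of a simultaneous stack-queue layout of $P_n$. So the first step is to observe that a path admits a simultaneous $1$-stack $1$-queue layout: order the vertices along the path, and note that consecutive-vertex edges neither cross nor nest, so a single stack and a single queue both suffice simultaneously with $s = q = 1$.

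With $s = q = 1$ fixed, the bound collapses to $\sn(P_n \boxtimes G) \le 2\dsn(G) + 1 + \dsn(G) = 3\dsn(G) + 1$. Hence it remains only to show $\dsn(G) \le (tw+1)\Delta$ for a bipartite graph $G$ of treewidth $tw$ and maximum degree $\Delta$; this is precisely \cref{lm:disp}, which the excerpt promises in \cref{sect:main} and which I am entitled to assume. Substituting yields $\sn(P_n \boxtimes G) \le 3(tw+1)\Delta + 1$, which is exactly the claimed inequality. So the structure of the proof is: (1) path has a simultaneous $(1,1)$-layout; (2) apply \cref{thm:sn_product}(iii) with $H = G$; (3) apply \cref{lm:disp} to bound $\dsn(G)$; (4) arithmetic.

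The conceptual subtlety worth spelling out is the role bipartiteness plays twice over: \cref{thm:sn_product} requires $H$ to be bipartite for the dispersable-embedding machinery to apply (dispersable book embeddings have $1$-regular pages, which interacts with the two-sided structure of the product's copies), and \cref{lm:disp} again uses bipartiteness to get the clean $(tw+1)\Delta$ bound on the matching book thickness — Bernhart and Kainen's conjecture that $\dsn$ of a regular bipartite graph equals its degree is the spiritual ancestor here. So in writing the corollary I would emphasize that the hypothesis ``$G$ bipartite'' is consumed entirely by these two black boxes and no further argument is needed.

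The main obstacle is not in \cref{cor:2} itself — given the two lemmas it is a one-line substitution — but in verifying that the symmetry reduction is legitimate, i.e. that \cref{thm:sn_product} as stated (with $H \boxtimes G$, $G$ admitting the simultaneous layout) really does apply when we want a bound on $P_n \boxtimes G$ with $G$ the bipartite graph. Since $\boxtimes$ is symmetric (as the excerpt explicitly notes), $P_n \boxtimes G \cong G \boxtimes P_n$, and we take the theorem's ``$H$'' to be our $G$ and the theorem's ``$G$'' to be $P_n$; the only thing to check is that $P_n$ admits a simultaneous $1$-stack $1$-queue layout, which is immediate. Everything else is bookkeeping.
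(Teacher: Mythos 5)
Your proposal is correct and matches the paper's proof: the paper likewise applies \cref{thm:sn_product}(iii) with $H$ the bipartite bounded-treewidth graph and the theorem's ``$G$'' set to the path (which trivially has a simultaneous $1$-stack $1$-queue layout), then bounds $\dsn(H)$ via \cref{lm:disp}. The only cosmetic difference is that \cref{lm:disp} is actually stated as $\dsn(G)\le s\cdot\Delta$ for an $s$-stack-layoutable bipartite graph, so the paper combines it with the bound $\sn(G)\le tw+1$ of Ganley and Heath to obtain $(tw+1)\Delta$ --- a step you folded into the lemma itself.
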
	

\paragraph{Negative Results.}

Next we investigate simultaneous stack-queue layouts. We prove that 
if a graph admits a simultaneous $s$-stack $q$-queue layout, then its pathwidth is bounded by
a function of $s$ and $q$. In other words, the class of $\Oh(1)$-pathwidth graphs coincides
with the class of graphs admitting a simultaneous $\Oh(1)$-stack $\Oh(1)$-queue layout.

\wormholeThm{thm-pw}
\begin{theorem}
	\label{thm:sq_pw}
	Let $G$ be a graph admitting a simultaneous $s$-stack $q$-queue layout. 
	Then $G$ has pathwidth at most $2 \cdot s \cdot q$.
\end{theorem}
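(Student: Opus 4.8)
The plan is to reduce the statement to a structural fact about a single ``colour class'' of the layout and then glue the pieces together along the given order. Fix a simultaneous $s$-stack $q$-queue layout of $G$ with vertex order $\sigma=(v_1,\dots,v_n)$. Recording for each edge the pair (its stack, its queue) partitions $E(G)$ into at most $sq$ classes $E_{p,r}$; as a graph on $V(G)$ read with the order $\sigma$, each $E_{p,r}$ \emph{simultaneously} has no two crossing edges and no two nested edges. I would first prove that every such non-crossing and non-nesting graph has a path decomposition of width at most $2$ that is moreover \emph{aligned} with $\sigma$, meaning its $i$-th bag contains $v_i$; the theorem then follows by taking the bagwise union over all $sq$ classes.

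For the structural lemma the only consequence of the two forbidden patterns I would use is this local fact: for any two distinct edges $e,e'$ of a non-crossing/non-nesting graph, either they share an endpoint, or one of them lies entirely to the left of the other in $\sigma$ (the larger endpoint of one precedes the smaller endpoint of the other). From it I would derive that (a) the connected components --- call them \emph{blocks} --- that have at least one edge occupy pairwise disjoint intervals of $\sigma$, and (b) each block $B$ carries a ``spine'' of vertices $h_0<h_1<\dots<h_M$ (its \emph{hubs}) with $h_0=\min B$, $h_M=\max B$, each $h_m h_{m+1}\in E$, and every vertex whose $\sigma$-position lies strictly between $h_m$ and $h_{m+1}$ adjacent only to $h_m$ and/or $h_{m+1}$. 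Statement (b) is proved by induction on $|V(B)|$: put $h_0=\min B$ and let $h_1$ be the rightmost neighbour of $h_0$; applying the local fact to the edge $h_0h_1$ shows that a vertex strictly between $h_0$ and $h_1$ can have no neighbour outside $\{h_0,h_1\}$ (a neighbour to the left of $h_0$ is impossible, one strictly inside would be nested under $h_0h_1$, one to the right would cross it), and then the subgraph induced on the positions $\ge h_1$ is again connected and non-crossing/non-nesting, so recursion applies.

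Given (a) and (b), I would build the aligned decomposition of a single class $H$ by setting, for each position $i$, $Y_i=\{v_i\}$ if no block span contains $i$, and $Y_i=\{v_i,h_m,h_{m+1}\}$ if $i$ lies in the hub-interval $[h_m,h_{m+1}]$ of the unique block whose span contains $i$. This is a path decomposition of $H$: by (b) every edge is either some $h_mh_{m+1}$ or joins a between-vertex to an incident hub, hence appears in some $Y_i$; each hub $h_m$ occupies the contiguous range $[h_{m-1},h_{m+1}]$; and plainly $|Y_i|\le 3$ with $v_i\in Y_i$. (The bound $3$ is best possible: consecutive hubs with a common neighbour between them form a triangle, so these classes can have pathwidth $2$.) Writing $Y^{p,r}_i$ for the bag obtained this way from $E_{p,r}$ and setting $X_i:=\bigcup_{p,r}Y^{p,r}_i$, the fact that $v_i$ lies in every $Y^{p,r}_i$ gives $|X_i|\le 1+\sum_{p,r}(|Y^{p,r}_i|-1)\le 1+2sq$; and $(X_1,\dots,X_n)$ is a path decomposition of $G$ because the defining properties hold inside each class and, for a fixed $v_j$, the set $\{i:v_j\in X_i\}$ is a union of intervals each containing $j$, hence an interval. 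Thus $\pw(G)\le 2sq$.

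The hard part is the structural lemma, and in particular confirming the hub/block picture (b): a single $1$-stack graph, or a single $1$-queue graph, can already have unbounded pathwidth (a rainbow, respectively a twist, along $\sigma$), so one genuinely needs the simultaneous absence of \emph{both} a crossing and a nesting. A second, minor point requiring care is that a vertex may be isolated in a given class yet sit (in $\sigma$) inside another block's span; such a vertex must still be placed in a size-$3$ bag carrying the two straddling hubs, so that the hubs' ranges remain contiguous after the union.
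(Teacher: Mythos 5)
Your proposal is correct and follows essentially the same route as the paper: partition the edges into $sq$ classes according to their (stack, queue) pair, show that each non-crossing, non-nesting class admits a width-$2$ path decomposition whose $i$-th bag contains $v_i$, and take bagwise unions, losing only the repeated copies of $v_i$. Your block-and-hub formulation of the key structural step is a left-to-right repackaging of the paper's \cref{lm:11_pw}, whose induction peels from the right end of $\sigma$ using the same core observation that any vertex lying strictly between the endpoints of an edge that can be neither crossed nor nested must be adjacent only to those two endpoints.
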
	

Corollaries~\ref{cor:1} and \ref{cor:2} provide sufficient conditions for a graph $G$
to imply a bounded stack number of $P_n \boxtimes G$. Yet many relatively simple graphs of bounded treewidth
(such as trees) have pathwidth $\Omega(\log n)$ and an unbounded vertex degree. 
A reasonable question is whether the conditions are necessary.
Next we study the aforementioned {\it natural approach} to construct stack layouts of graph
products, and prove that it cannot lead to a constant number of stacks for
graphs with an unbounded pathwidth. 
Formally, call a stack layout of $P_n \boxtimes G$ \df{separated}
if for at least two consecutive copies of $G$, $G_1$ and $G_2$, all 
vertices of $G_1$ precede all vertices of $G_2$ in the vertex order.
The next result shows that a separated layout of $P_n \boxtimes G$ with a constant
number of stacks implies a bounded pathwidth of $G$.

\wormholeThm{thm-separated}
\begin{theorem}
	\label{thm:separated}
	Assume $P_n \boxtimes G$ has a separated layout on $s$ stacks. Then $G$ admits 
	a simultaneous $s$-stack $s^2$-queue layout, and therefore, $\pw(G) \le 2 s^3$.
\end{theorem}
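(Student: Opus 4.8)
The plan is to start from a separated $s$-stack layout $\sigma$ of $P_n \boxtimes G$ and extract from it a simultaneous $s$-stack $s^2$-queue layout of a single copy of $G$; the pathwidth bound then follows immediately by plugging $q = s^2$ into \cref{thm:sq_pw}, giving $\pw(G) \le 2 s \cdot s^2 = 2 s^3$. So the real content is producing the simultaneous layout. Let $G_1$ and $G_2$ be the two consecutive copies guaranteed by the separation hypothesis, with all of $V(G_1)$ preceding all of $V(G_2)$ in $\sigma$; write $\sigma_1$ and $\sigma_2$ for the induced orders on $V(G_1)$ and $V(G_2)$. The copy of $G$ inside $P_n \boxtimes G$ spanned by $G_1$ carries exactly the $G$-edges, so restricting the page assignment of $\sigma$ to those edges already gives an $s$-stack layout of $G$ under the order $\sigma_1$. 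What we still need is a $q$-queue layout of $G$ under the \emph{same} order $\sigma_1$, and this is where the direct edges between $G_1$ and $G_2$ come in.

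First I would set up the correspondence between the two copies. Identifying $G_1$ and $G_2$ both with $G$ via the product structure, $\sigma_1$ and $\sigma_2$ are two linear orders on $V(G)$. The key observation is that for every edge $xy \in E(G)$, the strong product contains the direct edge joining $x_1 \in V(G_1)$ to $y_2 \in V(G_2)$ and also the one joining $y_1$ to $x_2$; in fact every $G$-edge of copy $1$, together with its ``crossed'' partner, sits among the inter-copy edges. Because the layout is separated, all of $V(G_1)$ lies to the left of all of $V(G_2)$, so an inter-copy edge $\{x_1, y_2\}$ spans the order as the pair $(\text{position of } x \text{ in } \sigma_1,\ \text{position of } y \text{ in } \sigma_2)$. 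Two such edges $\{x_1, y_2\}$ and $\{u_1, w_2\}$ \emph{nest} in $\sigma$ precisely when $x <_{\sigma_1} u$ and $w <_{\sigma_2} y$, i.e.\ when the relative order of the left endpoints and the relative order of the right endpoints are reversed; they \emph{cross} exactly when both orders agree. Thus a single stack of inter-copy edges is a set of pairs that is ``monotone'' between $\sigma_1$ and $\sigma_2$ — it corresponds to a common sub-order — and a single queue of inter-copy edges is an ``antimonotone'' set.

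Now I would exploit that the inter-copy edges come in $s$ stacks. Each stack $j$ of inter-copy edges, being crossing-free in $\sigma$, is monotone: there is a partial order on a subset of $V(G)$ on which $\sigma_1$ and $\sigma_2$ agree. The edges of $G$ (realized as the inter-copy edges $\{x_1,y_2\}$ for $xy \in E(G)$, after picking one of the two orientations) are distributed among these $s$ stacks. I claim that within stack $j$, when we re-read those edges using the order $\sigma_1$ on \emph{both} endpoints, they break into at most $s$ queues. The reason: inside stack $j$ the orders $\sigma_1$ and $\sigma_2$ are ``compatible,'' so the nesting structure in $\sigma_1$ of the $G$-edges assigned to stack $j$ is controlled by how these same edges were distributed among the $s$ stacks of $\sigma$ when viewed as ordinary $G$-edges in copy $1$ — a set of edges that is simultaneously crossing-free in one order and read off in a compatible order cannot contain a long nesting without forcing a long crossing somewhere, and $s$ stacks cap the crossing length. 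Carrying this out carefully, each of the $s$ inter-copy stacks yields $\le s$ queues in $\sigma_1$, for a total of $s^2$ queues; together with the $s$ stacks of $G$-edges in copy $1$ this is the desired simultaneous $s$-stack $s^2$-queue layout of $G$.

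The main obstacle I anticipate is the last step: making precise the ``monotone/antimonotone'' dichotomy and the counting argument that turns one crossing-free class in $\sigma$ into $s$ nesting-free classes in $\sigma_1$. One clean way is via Dilworth/Mirsky-type reasoning — view the $G$-edges of inter-copy stack $j$ as points in the grid $\sigma_1 \times \sigma_1$ (left endpoint vs.\ right endpoint) and argue that a chain of $s{+}1$ mutually nested edges in $\sigma_1$ would, under the compatibility of $\sigma_1$ and $\sigma_2$ on stack $j$, translate into $s{+}1$ mutually crossing edges among the original $G$-edges of copy $1$, contradicting that they lie in only $s$ stacks. Once that translation lemma is in hand, partitioning each inter-copy stack into $\le s$ queues is immediate, and the rest is bookkeeping.
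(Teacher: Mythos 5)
Your overall strategy---inherit the $s$ stacks from the layout of $G_1$, and manufacture the queue partition from the inter-copy edges between $G_1$ and $G_2$---is sound and close in spirit to the paper, but the decisive step is wrong as stated, in two places. First, a small but telling sign error: since crossing of two inter-copy edges $\{x_1,y_2\}$, $\{u_1,w_2\}$ happens exactly when the $\sigma_1$-order of $x,u$ agrees with the $\sigma_2$-order of $y,w$, a \emph{stack} of inter-copy edges is the \emph{antimonotone} (order-reversing) class, not the monotone one. Second, and more seriously, your ``translation lemma'' does not hold: if $a_1b_1,\dots,a_mb_m$ are edges of $G$ whose direct-edge realizations lie in one inter-copy stack and which form an $m$-rainbow in $\sigma_1$ (i.e.\ $a_1<_{\sigma_1}\dots<_{\sigma_1}a_m<_{\sigma_1}b_m<_{\sigma_1}\dots<_{\sigma_1}b_1$), then the copies of these edges inside $G_1$ are, by definition, an $m$-rainbow in $\sigma$ as well. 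A rainbow is perfectly legal in a stack layout---stacks forbid twists, not rainbows---so there are no ``$s+1$ mutually crossing edges among the original $G$-edges of copy $1$'' and no contradiction. As written, the claim that each inter-copy stack splits into at most $s$ queues is unsupported, and the proof does not close.

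The gap is repairable, but only by bringing in a third family of edges you never use: the $P_n$-edges $\{(b)_1,(b)_2\}$ joining corresponding vertices of the two consecutive copies. Concretely, if the direct edges $\{(a_i)_1,(b_i)_2\}$ all lie in one stack, the order-reversal forced by non-crossing turns $a_1<_{\sigma_1}\dots<_{\sigma_1}a_m$ into $b_m<_{\sigma_2}\dots<_{\sigma_2}b_1$; combined with $b_m<_{\sigma_1}\dots<_{\sigma_1}b_1$ and the separation of the copies, the $m$ edges $\{(b_i)_1,(b_i)_2\}$ then form an $m$-twist, so $m\le s$, giving rainbow number at most $s$ per inter-copy stack and $s^2$ queues in total. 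With that fix your argument becomes a legitimate alternative to the paper's: the paper instead assumes a maximum rainbow of size $k>s^2$ among the $G_1$-edges, applies Erd\H{o}s--Szekeres (\cref{lm:sep}) to the $P_n$-edges above its left endpoints to extract a rainbow of length $>s$ there, and derives a $(>s)$-twist among the \emph{direct} edges; your (corrected) route partitions by the stack of the direct edges and finds the twist among the $P_n$-edges, avoiding Erd\H{o}s--Szekeres altogether. Either way, the final pathwidth bound via \cref{thm:sq_pw} is the easy part, as you say.
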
	

The remaining of the paper is organized as follows. After recalling basic definitions in \cref{sect:prel}, 
we prove the main results of the paper in \cref{sect:main}. 
\cref{sect:related} is devoted to a discussion of related works on 
stack and queue layouts of graph products.
\cref{sect:concl} concludes the paper with possible future directions and interesting open problems.

\section{Preliminaries}
\label{sect:prel}

Throughout the paper, $G = \big(V(G), E(G)\big)$ is a simple undirected graph.
We denote a path with $n$ vertices by $P_n$.
A \df{vertex order}, $\sigma$, of a graph $G$ is a
total order of its vertex set $V(G)$, such that for any two vertices $u$ and $v$,
$u <_{\sigma} v$ if and only if $u$ precedes $v$ in $\sigma$. 
Let $F$ be a set of $k \geq 2$ independent (that is, having no common endpoints) edges 
$(s_i, t_i), 1 \le i \le k$. 
If $s_1 <_{\sigma} \dots <_{\sigma} s_k <_{\sigma} t_k <_{\sigma} \dots <_{\sigma} t_1$, then
$F$ is a \df{$k$-rainbow}, while if 
$s_1 <_{\sigma} \dots <_{\sigma} s_k <_{\sigma} t_1 <_{\sigma} \dots <_{\sigma} t_k$, then
$F$ is a \df{$k$-twist}. Two independent edges forming a $2$-twist ($2$-rainbow) are 
called \df{crossing} (\df{nested}).

A \df{$k$-stack layout} of a graph is a pair $(\sigma, \{\Sh_1, \dots, \Sh_k\})$, 
where $\sigma$ is a vertex order and $\{\Sh_1, \dots, \Sh_k\}$ is a partition of $E(G)$ into 
\df{stacks}, that is, sets of pairwise non-crossing edges.
Similarly, a \df{$k$-queue layout} is $(\sigma, \{\Qh_1, \dots, \Qh_k\})$, 
where $\{\Qh_1, \dots, \Qh_k\}$ is a partition of $E(G)$ into sets of pairwise non-nested edges called
\df{queues}. The minimum number of stacks (queues) in a stack (queue) layout of a graph
is its \df{stack number} (\df{queue number}).
It is easy to see that a $k$-stack layout ($k$-queue layout) cannot have a $k$-twist ($k$-rainbow). 
Furthermore, a vertex order without a $(k+1)$-rainbow corresponds to
a $k$-queue layout~\cite{HR92}. In contrast, a vertex order without a $(k+1)$-twist may not produce
a $k$-stack layout but corresponds to a $f(k)$-stack layout; the best-known function $f$ is
quadratic~\cite{DM19}.

A \df{tree decomposition} of a graph $G$ is given by a tree $T$ whose nodes index a collection
$\big(B_x \subseteq V(G) : x \in V(T)\big)$ of sets of vertices in $G$ called \df{bags} such that:
\begin{compactitem}
	\item For every edge $(u, v)$ of $G$, some bag $B_x$ contains both $u$ and $v$, and
	\item For every vertex $v$ of $G$, the set $\{x \in V(T ) : v \in B_x\}$ induces a non-empty
	connected subtree of $T$.
\end{compactitem}
The \df{width} of a tree-decomposition is $\max_x |B_x|-1$, and the \df{treewidth} of a graph~$G$, 
denoted $\tw(G)$, is the minimum width of any tree decomposition of $G$.

A \df{path decomposition} is a tree decomposition in which the underlying tree, $T$, is a
path. Thus, it can be thought of as a sequence of subsets of vertices, called
\df{bags}, such that each vertex belongs to a contiguous subsequence of bags and each
two adjacent vertices have at least one bag in common.
The \df{pathwidth} of a graph~$G$, 
denoted $\pw(G)$, is the minimum width of any path decomposition of $G$.
We also use an equivalent definition of the pathwidth called the \df{vertex separation number}~\cite{Kin92,Bod98}.
Consider a vertex order $\sigma$ of a graph $G$.
The \df{vertex cut} in $\sigma$ at a vertex $v \in V(G)$ is defined to be 
$C(v) = \{x \in V(G) : \exists (x,y) \in E(G), x <_{\sigma} v \le_{\sigma} y\}$. 
The \df{vertex separation number} of $G$ is the minimum, taken
over all vertex orders $\sigma$ of $G$, of a maximum cardinality of a vertex cut in $\sigma$.

\section{Main Proofs}
\label{sect:main}

\subsection{Positive Results}

\begin{backInTimeThm}{thm-product}
	\begin{theorem}
		Let $H$ be a bipartite graph and $G$ be a graph that admits a simultaneous $s$-stack $q$-queue layout.
		Then
		\begin{compactenum}[(i)]
			\item $\sn(H \cbox G) \le s + \dsn(H)$,
			\item $\sn(H \times G) \le 2 q \cdot \dsn(H)$,
			\item $\sn(H \boxtimes G) \le 2 q \cdot \dsn(H) + s + \dsn(H)$.
		\end{compactenum}
	\end{theorem}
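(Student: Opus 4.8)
\emph{Sketch of the intended argument.}
Fix a simultaneous $s$-stack $q$-queue layout of $G$: a vertex order $\sigma$ together with stacks $\Sh_1,\dots,\Sh_s$ and queues $\Qh_1,\dots,\Qh_q$. Fix a dispersable book embedding of $H$: a vertex order $\tau$ together with pages $M_1,\dots,M_{\dsn(H)}$, each a non-crossing matching. Since $H$ is bipartite, properly $2$-color $V(H)$. A product of $H$ and $G$ consists of $|V(H)|$ \emph{copies} of $G$, one per vertex of $H$; I build the vertex order of the product by placing the copies consecutively in the order $\tau$ and ordering the copy of $h$ by $\sigma$ if $h$ has color $0$ and by the reverse $\overline{\sigma}$ if $h$ has color $1$. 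Classify the edges as \emph{copy edges} (inside a single copy, coming from $E(G)$), \emph{inter-copy edges} ($(h,g)(h',g)$ with $hh'\in E(H)$), and \emph{diagonal edges} ($(h,g)(h',g')$ with $hh'\in E(H)$ and $gg'\in E(G)$). The plan is to place the three types on disjoint blocks of pages and add up the costs: claim (i) needs only copy- and inter-copy pages, claim (ii) needs only diagonal pages, and claim (iii) needs all three.

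Copy edges use $s$ pages: distinct copies occupy pairwise disjoint intervals of the order, so edges from different copies never cross, and since ``crossing'' is invariant under reversing the order, each copy — direct or reversed — still carries the $s$-stack layout of $G$; one product page per $\Sh_m$ then works. Inter-copy edges use $\dsn(H)$ pages, one per matching $M_\ell$: for a single edge $hh'\in M_\ell$ the endpoints have different colors, hence exactly one of the two copies is reversed, and a direct check shows the inter-copy edges $(h,g)(h',g)$, $g\in V(G)$, then form a rainbow and so are pairwise non-crossing; two distinct edges of $M_\ell$ are vertex-disjoint and non-crossing in $\tau$, so the pairs of copies they join occupy intervals that are disjoint or strictly nested, whence their edge bundles cannot cross. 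The dispersability of $H$ is essential here: if two $H$-edges on one page shared a vertex, their bundles would leave the same copy from different vertices and could cross.

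The core of the proof is the diagonal edges, for which I claim $2q\cdot\dsn(H)$ pages suffice. The key (routine) computation is that for $hh'\in E(H)$ with $\tau(h)<\tau(h')$ — so one of the two copies is reversed — two diagonals $(h,g_1)(h',g_2)$ and $(h,g_3)(h',g_4)$ are non-crossing \emph{precisely} when the assignment carrying the endpoint in the copy of $h$ to the endpoint in the copy of $h'$ is $\sigma$-monotone on that pair; thus after reversal it is the \emph{queue} structure of $G$, not its stack structure, that tames diagonals. Accordingly, orient each diagonal so that its source lies in the copy with the smaller $\tau$-value, and assign it the page $(\ell,m,\varepsilon)$, where $M_\ell$ contains its $H$-edge, $\Qh_m$ contains its $G$-edge, and $\varepsilon\in\{\uparrow,\downarrow\}$ records whether its source is the $\sigma$-smaller or $\sigma$-larger endpoint of that $G$-edge. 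For diagonals sharing one $H$-edge, the non-nesting property of $\Qh_m$ forces the source-to-target map to be monotone, so every such page is crossing-free; distinct $H$-edges on $M_\ell$ again occupy disjoint or strictly nested intervals and do not interfere. This uses $2\cdot q\cdot\dsn(H)$ pages, and the factor $2$ is unavoidable because the $\uparrow$ and the $\downarrow$ diagonals of a common queue can cross each other. Summing the costs of the relevant edge types gives (i) $s+\dsn(H)$, (ii) $2q\cdot\dsn(H)$, and (iii) $2q\cdot\dsn(H)+s+\dsn(H)$. I expect the one delicate step to be pinning down the monotonicity criterion for diagonals and verifying that it survives every combination of which endpoint-copy is the reversed one; the remainder is the disjoint-or-nested interval bookkeeping made available by dispersability.
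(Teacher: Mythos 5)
Your proposal is correct and follows essentially the same route as the paper: the same interleaved vertex order (copies of $G$ placed in the dispersable order of $H$, alternating $\sigma$ and its reverse according to the bipartition), the same three-way edge classification, and the same page assignment for direct edges indexed by the $H$-page, the $G$-queue, and the orientation of the $G$-edge, with non-nesting in the queue supplying exactly the monotonicity that prevents crossings. The only difference is presentational: the paper verifies the diagonal case by a short two-edge contradiction (a crossing would force a $2$-rainbow in a single queue) rather than by stating a monotonicity criterion, but the content is identical.
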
	
\end{backInTimeThm}

\begin{figure}[!htb]
	\centering
	\includegraphics[page=3,width=0.8\textwidth]{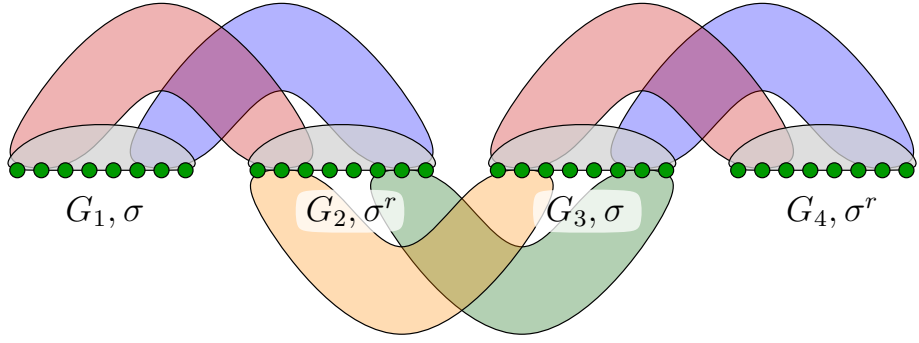}
	\caption{An $(s + 4q + 2)$-stack layout of the strong product of $P_4$ and a graph $G$ that admits a simultaneous 
		$s$-stack $q$-queue layout using vertex order $\sigma$.
		$G_1, G_2, G_3$, and $G_4$ correspond to copies of $G$ laid out by 
		alternating $\sigma$ and its reverse, $\sigma^r$.
		Groups of stacks are colored differently.}
	\label{fig:thm-product}
\end{figure}

\begin{proof}
	For every pair of graphs, the set of edges of the strong product is the union 
	of edges of the cartesian product and the direct product of the graphs. Therefore, claim (iii) of the 
	theorem follows from claims (i) and (ii), which we prove next.
	
	Let $\pi$ be a vertex order of $H$ in the dispersable stack layout, and let $\sigma$ and $\sigma^r$
	be a vertex order of $G$ and its reverse in the simultaneous stack-queue layout.
	We call the parts of the bipartition of $H$ \df{white} and \df{black}, and
	denote by $0 \le \pi(v) < n$ the index of vertex $v \in V(H)$ in $\sigma$.
	To construct an order, $\phi$, for the stack layout of a graph product, 
	we start with $\pi$ and replace each white vertex of $H$ with $\sigma$ and each black vertex of $H$ with $\sigma^r$.
	Formally, for two vertices $(v, x)$ and $(u, y)$ of a product, let $\phi(v, x) < \phi(u, y)$ 
	if and only if 
	\begin{compactenum}[(C1)]
		\item $v \neq u$ and $\pi(v) < \pi(u)$, or 
		\item $v = u$, $v$ is white, and $x <_{\sigma} y$, or
		\item $v = u$, $v$ is black, and $y <_{\sigma} x$.
	\end{compactenum}
	We emphasize that the same vertex order is utilized for all three graph products; 
	see \cref{fig:pathpath_stacks} and \cref{fig:thm-product} for illustrations.

	We first verify that $\sn(H \cbox G) \le s + \dsn(H)$, thus proving claim (i) of the theorem. 
	Since $\sigma$ and $\sigma^r$ are vertex orders
	of an $s$-stack layout of $G$ and different copies of $G$ are separated in $\phi$, 
	all $G$-edges are embedded in $s$ stacks.
	Further, every edge of $H$ is incident to a white and a black vertex of $H$ that correspond to $\sigma$ and $\sigma^r$.
	Thus, $H$-edges between a pair of copies of $G$ are non-crossing and can be assigned to the same stack.
	Since the edges of $H$ require $\dsn(H)$ stacks and each stack consists of independent edges, 
	all $H$-edges are embedded in $\dsn(H)$ stacks.
	
	Next we show that direct-edges can be assigned to $2 q \cdot \dsn(H)$ stacks, which we denote by
	$\Sh_i^j$ for $1 \le i \le q$ and $1 \le j \le 2 \dsn(H)$.
	To this end, partition all direct-edges into
	$2\dsn(H)$ groups and employ $q$ stacks for each of the groups.
	A group of a direct-edge, $e$, with endpoints $(v, x)$ and $(u, y)$ is determined by the stack
	of $(v, u) \in E(H)$ in the dispersable layout of $H$ and by the relative order 
	of $x$ and $y$ in $\sigma$. Specifically, 
	\begin{compactitem}
		\item 	if $x <_{\sigma} y$, $(v, u) \in \Sh_j$, and $(x, y) \in \Qh_i$, then $e \in \Sh_i^{2j}$;
		\item 	if $y <_{\sigma} x$, $(v, u) \in \Sh_j$, and $(x, y) \in \Qh_i$, then $e \in \Sh_i^{2j+1}$.
	\end{compactitem}
	Here $\{\Sh_1, \dots, \Sh_{\dsn(H)}\}$ is the partition of $E(H)$ in the dispersable stack layout of $H$, 
	and $\{\Qh_1, \dots, \Qh_q\}$ is the partition of $E(G)$ in the $q$-queue layout of $G$.
	
	Let us verify that the direct-edges in a stack are non-crossing.	
	For the sake of contradiction, assume two edges, $e_1$ with endpoints
	$(v_1, x_1)$ and $(u_1, y_1)$, and $e_2$ with endpoints $(v_2, x_2)$ and $(u_2, y_2)$, cross each other.
	We assume $e_1$ and $e_2$ belong to a group $\Sh_i^{2j}$ for some $1 \le i \le q, 1 \le j \le \dsn(H)$;
	the other case is symmetric.	
	Since $e_1$ and $e_2$ cross, $\pi(v_1) = \pi(v_2)$ and
	$\phi(v_1, x_1) < \phi(v_2, x_2) < \phi(u_1, y_1) < \phi(u_2, y_2)$.
	By (C2), we have $x_1 <_{\sigma} x_2$, and by (C3), we have $y_2 <_{\sigma} y_1$.
	Hence, two edges of $G$ from the same queue, $(x_1, y_1)$ and $(x_2, y_2)$, form a $2$-rainbow in $\sigma$; 
	a contradiction.
	
	Therefore, $H \cbox G$ admits an $(s + \dsn(H))$-stack layout,
	$H \times G$ admits a $(2 q \cdot \dsn(H))$-stack layout, and $H \boxtimes G$ admits a 
	$(2 q \cdot \dsn(H) + s + \dsn(H))$-stack layout.
\end{proof}	

The bounds of \cref{thm:sn_product} can be improved for certain families of graphs.
For example, the stack number of the strong product of two paths is at most~$4$, 
while the theorem yields an upper bound of $7$; see \cref{fig:pathpath_stacks}. However, 
for a complete graph on $2k$ vertices, $K_{2k}$, it holds that 
$\sn(P_n \boxtimes K_{2k}) \ge 3k-1$ (following from the density of the product~\cite{BK79}), 
while $\sn(K_{2k}) = \qn(K_{2k}) = k$. Hence, the given bounds are asymptotically worst-case optimal.

Next we explore simultaneous linear layouts of bounded-pathwidth graphs.
While it is known that the stack number and the queue number of pathwidth-$p$ graphs is at most 
$p$~\cite{TY02,DMW05}, the existing proofs do not utilize the same vertex order for the stack
and queue layouts. We show that the bounds can be achieved in a simultaneous stack-queue layout.

\begin{lemma}
	\label{lm:pw_sim}
	A graph of pathwidth $p$ has a simultaneous $p$-stack $p$-queue layout.
\end{lemma}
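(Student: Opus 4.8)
The plan is to start from a path decomposition of $G$ of width $p$ and derive a single vertex order $\sigma$ that simultaneously certifies a $p$-stack layout and a $p$-queue layout. First I would normalize the path decomposition: make it \emph{nice} (each bag differs from the next by introducing or forgetting a single vertex) and assume every vertex $v$ is introduced exactly once; let $\sigma$ order the vertices by the time step at which they are introduced, breaking ties arbitrarily. This is the standard order used in both the $\pw \le p \Rightarrow \sn \le p$ proof and the $\pw \le p \Rightarrow \qn \le p$ proof, so the point of the lemma is really that \emph{one} order works for both, and I would make that explicit rather than invoking the two results as black boxes.

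The key structural fact I would establish about $\sigma$ is: for every vertex $v$, the set of vertices $u <_\sigma v$ that have a neighbor $w \ge_\sigma v$ has size at most $p$ — i.e. the vertex-separation number bound from the preliminaries, which is equivalent to pathwidth $\le p$. Concretely, such a $u$ must lie together with some later vertex in a common bag, so $u$ belongs to the bag active at the moment $v$ is introduced (minus $v$ itself, or one of the at most $p$ vertices straddling the cut at $v$). Having fixed this "cut size $\le p$" property, I would use it twice. For the \textbf{queue layout}: a vertex order with no $(p{+}1)$-rainbow is a $p$-queue layout (stated in the preliminaries, citing \cite{HR92}), and a $(p{+}1)$-rainbow would force $p{+}1$ pairwise-nested edges $(s_i,t_i)$ with $s_1 < \dots < s_{p+1} < t_{p+1} < \dots < t_1$; then at the vertex $s_{p+1}$ the cut contains all of $s_1,\dots,s_p$ (each has the later neighbor $t_i \ge s_{p+1}$), contradicting cut size $\le p$. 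Actually I need to be slightly careful — I want the cut \emph{strictly before} $s_{p+1}$ or a marginal adjustment — but this is exactly the routine bookkeeping one does in the $\qn \le p$ proof, so I would just cite that the separation-number order gives a $p$-queue layout.

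For the \textbf{stack layout} the argument is different since "no $(p{+}1)$-twist" does \emph{not} suffice; instead I would explicitly partition the edges into $p$ stacks. Assign edge $(u,v)$ with $u <_\sigma v$ to stack $\ell$ where $\ell$ is the position of $u$ within the cut $C(v)$ (ordered by $<_\sigma$) — equivalently, assign colors greedily scanning vertices in $\sigma$-order, giving each newly introduced vertex's "back-edges" colors consistent with the at-most-$p$ active vertices it connects to. The claim is that two edges in the same stack cannot cross: if $(u_1,v_1)$ and $(u_2,v_2)$ with $u_1 <_\sigma u_2 <_\sigma v_1 <_\sigma v_2$ were both in stack $\ell$, then at the moment $v_1$ is processed both $u_1$ and $u_2$ are in its cut and in the \emph{same} relative position, which is impossible since they are distinct. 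This is exactly the interval-graph / bandwidth-style coloring, and it is the same information (the order in which cut-vertices appear) that drives the queue argument, so a single $\sigma$ serves both. The main obstacle — and the only thing requiring care — is getting the two boundary conventions (which vertex "owns" the cut, whether cuts are taken at $v$ or just before $v$) to line up so that the identical $\sigma$ genuinely satisfies both the rainbow-free condition for queues and the crossing-free coloring for stacks; I expect this to be a short but fiddly verification rather than a conceptual difficulty.
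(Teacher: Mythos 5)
Your choice of vertex order (the vertex-separation order, equivalently the introduction order of a nice path decomposition) and your queue argument (a $(p{+}1)$-rainbow forces a cut of size $p{+}1$) match the paper's proof. The gap is in the stack assignment. Your rule ``assign $(u,v)$ with $u <_\sigma v$ to the stack indexed by the position of $u$ within $C(v)$'' compares positions taken in \emph{different} cuts, and your verification silently assumes they are taken in the same one: for crossing edges $(u_1,v_1)$ and $(u_2,v_2)$ with $u_1 <_\sigma u_2 <_\sigma v_1 <_\sigma v_2$, the first edge's index is the position of $u_1$ in $C(v_1)$ while the second's is the position of $u_2$ in $C(v_2)$, and vertices may leave the cut between $v_1$ and $v_2$, so these can coincide. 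Concretely, take $C_4$ on vertices $a <_\sigma b <_\sigma c <_\sigma d$ with edges $(a,b),(a,c),(b,d),(c,d)$; this order realizes the vertex separation number $2 = \pw(C_4)$, yet $C(c)=\{a,b\}$ and $C(d)=\{b,c\}$ give both crossing edges $(a,c)$ and $(b,d)$ index $1$, so your rule places them in the same stack.

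The paper avoids this by not using a static index at all: it groups the edges by their \emph{earlier} endpoint (all forward edges $E^i$ of $v_i$ go to one stack, which is safe since they share $v_i$) and assigns stacks greedily in $\sigma$-order, using the invariant that each already-processed $E^j$ occupies a single stack. An edge of $E^i$ can only cross forward edges of vertices in $C(v_{i+1}) \setminus \{v_i\}$, a set of size at most $p-1$, so at most $p-1$ stacks are blocked and a $p$-th is free. Your parenthetical ``assign colors greedily to each newly introduced vertex's back-edges'' gestures at a greedy scheme but groups edges by their \emph{later} endpoint and does not state what the colors must be consistent with, so it does not substitute for the missing argument. Replacing your static rule with the greedy assignment grouped by earlier endpoint (or any equivalent correct scheme) would close the gap; the rest of the proposal is sound.
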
	

\begin{proof}
	Consider a vertex order, $\sigma$, of the given graph, $G$, corresponding to its
	vertex separation number, which equals to the pathwidth, $p$~\cite{Kin92,Bod98}.
	We prove that $\sigma$ yields a $p$-stack layout of $G$ and a $p$-queue layout of $G$;
	see \cref{fig:pw_sim}.
	
	Assume that edges of $G$ form a rainbow of size greater than $p$ with respect to $\sigma$.
	That is, let $\sigma$ be such that $u_1 <_{\sigma} \dots <_{\sigma} u_{p'} <_{\sigma} v_{p'} <_{\sigma} \dots <_{\sigma} v_1$
	for some $p' > p$ and $(u_i, v_i) \in E(G)$ for all $1 \le i \le p'$.
	Then the vertex cut at $v_{p'}$ has cardinality at least $p'$, 
	as $u_1, \dots, u_{p'} \in C(v_{p'})$,	
	which contradicts that the vertex separation
	is $p$. Therefore, the queue number of $G$ is at most $p$.
	
	To construct a $p$-stack layout, consider the vertices of $G$ in the order
	$v_1 <_{\sigma} v_2 <_{\sigma} \dots <_{\sigma} v_n$. Let $E^i$ be the set of
	\df{forward} edges of $v_i$, that is, 
	$E^i = \{(v_i, y) \in E(G) : v_i <_{\sigma} y\}$.
	We process the vertices in the order
	and assign edges to $p$ stacks while maintaining the following invariant 
	for every $1 < i \le n$:
	\begin{compactitem}
		\item all edges $E^1, \dots, E^{i-1}$ are assigned to one of $p$ stacks, and
		\item all edges from $E^j$ for every $1 \le j \le i-1$ are in the same stack.
	\end{compactitem}	
	Clearly, the invariant is satisfied for $i = 2$ by assigning $E^1$ to a single stack. 
	Suppose we obtained a stack assignment for all forward edges up to
	$E^{i-1}$; let us process $E^i$. Assume that $E^i \neq \emptyset$, and 
	observe that $v_i \in C(v_{i+1})$ and $|C(v_{i+1}) \setminus \{v_i\}| \le p - 1$. 
	Edges of $E^i$ can cross only already processed edges incident
	to a vertex from $C(v_{i+1}) \setminus \{v_i\}$. By the assumption of our invariant,
	such edges utilize at most $p-1$ distinct stacks. Hence, we have an available stack, which
	we use for $E^i$, and thus maintaining the invariant.
\end{proof}	

\begin{figure}[!tb]
	\begin{subfigure}[b]{0.49\textwidth}
		\centering
		\includegraphics[page=5,width=0.95\textwidth]{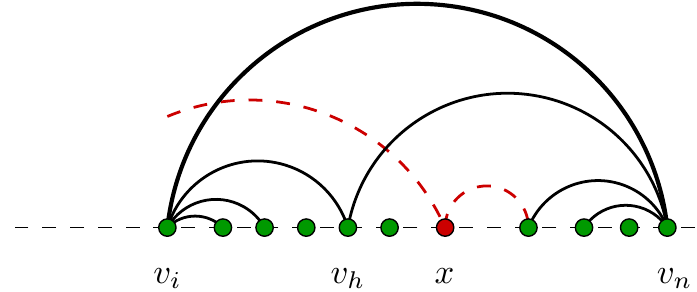}
		\caption{}
	\end{subfigure}
	\hfill
	\begin{subfigure}[b]{0.49\linewidth}
		\centering
		\includegraphics[page=6,width=0.95\textwidth]{sq}
		\caption{}
	\end{subfigure}
	\caption{An illustration for \cref{lm:pw_sim}: creating (a)~a $2$-stack and (b)~a $2$-queue layout 
		for a graph with pathwidth $2$ using its vertex separation order.}
\vspace{-0.3cm}
	\label{fig:pw_sim}
\end{figure}

\cref{cor:1} follows from \cref{thm:sn_product}, \cref{lm:pw_sim}, and an 
observation that $\dsn(P_n) = 2$. In order to prove \cref{cor:2}, we need the following
auxiliary lemma.

\begin{lemma}
	\label{lm:disp}
	Let $G$ be a bipartite graph of maximum vertex degree $\Delta$ that admits an
	$s$-stack layout. Then $\dsn(G) \le s \cdot \Delta$.
\end{lemma}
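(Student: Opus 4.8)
The plan is to start from an $s$-stack layout $(\sigma, \{\mathcal{S}_1, \dots, \mathcal{S}_s\})$ of the bipartite graph $G$ and refine the edge partition so that each page becomes $1$-regular, using the bounded degree to control how many new pages are needed. The key observation is that within a single stack $\mathcal{S}_j$, the non-crossing condition already imposes a lot of structure: if we look at the edges of $\mathcal{S}_j$ incident to a fixed vertex $v$, together with the fact that $G$ is bipartite (hence has no odd cycles, and in particular no triangles), the ``conflict graph'' whose vertices are the edges of $\mathcal{S}_j$ and whose edges join two $G$-edges sharing an endpoint should be properly colorable with few colors. Concretely, I would argue that the edges of one stack, sharing endpoints, form a graph of maximum degree at most $\Delta$ — since each $G$-edge $(u,v)$ shares an endpoint only with the $\le \Delta - 1$ other edges at $u$ and the $\le \Delta - 1$ other edges at $v$, but using non-crossingness and bipartiteness I expect to get the sharper bound that the conflict graph restricted to one stack is a disjoint union of paths, or at worst has small chromatic number, so that $\Delta$ colors suffice per stack. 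Multiplying by the $s$ stacks then gives $\dsn(G) \le s \cdot \Delta$.

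The first step is to fix a stack $\mathcal{S}_j$ and understand its conflict graph $C_j$: vertices are edges of $\mathcal{S}_j$, adjacency means sharing an endpoint. A proper coloring of $C_j$ exactly corresponds to splitting $\mathcal{S}_j$ into matchings, and since crossings are already forbidden, each color class is a non-crossing matching, i.e.\ a valid dispersable page. The second step is to bound $\chi(C_j)$. Here I would use that $G$ is bipartite with parts, say, $X$ and $Y$: orient attention to the vertices of $X$. For $v \in X$, the edges of $\mathcal{S}_j$ at $v$ are pairwise non-crossing and all go to $Y$; consider also that an edge $(v, y)$ of $\mathcal{S}_j$ conflicts at its $Y$-endpoint with other edges of $\mathcal{S}_j$ at $y$. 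I expect a greedy / Vizing-type argument: color the edges of $\mathcal{S}_j$ greedily in the stack order $\sigma$; when we reach an edge $e$, the edges already colored that conflict with $e$ are those sharing $e$'s left endpoint (at most $\Delta - 1$ of them) — the key point being that edges sharing $e$'s \emph{right} endpoint and already processed cannot force a new color because of the nesting/crossing structure within a stack. If that clean claim holds, $\Delta$ colors always suffice for $C_j$.

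The third step is just bookkeeping: do this independently for each of the $s$ stacks, obtaining at most $s \cdot \Delta$ dispersable pages in total, all using the \emph{same} vertex order $\sigma$ inherited from the original $s$-stack layout; hence $\dsn(G) \le s \cdot \Delta$. The main obstacle I anticipate is pinning down exactly why $\Delta$ colors — rather than, say, $2\Delta - 1$ from a naive Vizing bound on the conflict graph — suffice within one stack; this is where bipartiteness and the non-crossing condition must be combined carefully, presumably by showing that in the stack order each edge has at most $\Delta - 1$ \emph{earlier} conflicting edges, or equivalently that the conflict graph of a single stack of a bipartite graph is $\Delta$-degenerate. If instead only a $2\Delta$-type bound came out per stack, the theorem's stated bound $s\cdot\Delta$ would fail, so getting the degeneracy/greedy argument tight is the crux; everything else is routine.
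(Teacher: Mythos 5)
Your high-level plan is the right one and matches the paper's: keep the vertex order, split each of the $s$ stacks into at most $\Delta$ matchings, and observe that every such matching is automatically non-crossing (being a subset of a stack), hence a valid dispersable page. But the step you yourself identify as the crux --- why $\Delta$ rather than $2\Delta-1$ matchings suffice per stack --- is left unproved, and the specific route you sketch does not work. Your claimed invariant, that in the left-to-right stack order each edge has at most $\Delta-1$ earlier conflicting edges, is false: take vertex order $x_1 <_\sigma x_2 <_\sigma w_1 <_\sigma w_2 <_\sigma u <_\sigma v$ with edges $(x_1,v),(x_2,v),(w_1,u),(w_2,u),(u,v)$. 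These five edges are pairwise non-crossing (so they form a single stack), the graph is bipartite with $\Delta=3$, yet the edge $(u,v)$ conflicts with all four earlier edges. So the greedy bound you hoped for fails, and bipartiteness plus non-crossingness do not give you the clean ``$\le \Delta-1$ earlier conflicts'' structure.

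The missing idea is that no geometric argument is needed at all for the coloring step: each stack $\mathcal{S}_j$ is a subgraph of the bipartite graph $G$, hence itself bipartite with maximum degree at most $\Delta$, and by K\H{o}nig's edge-colouring theorem every bipartite graph has chromatic index equal to its maximum degree. This partitions $\mathcal{S}_j$ into at most $\Delta$ matchings outright (in your counterexample above, the conflict graph is two triangles sharing a vertex and is indeed $3$-colourable, just not by your greedy order). The non-crossing condition is used only afterwards, to certify that each matching is a legal page under $\sigma$. This is essentially what the paper does (it phrases each stack as a bipartite outerplanar graph and partitions its edges into $\Delta$ one-regular subgraphs); once you replace your greedy/degeneracy attempt with K\H{o}nig's theorem, the rest of your bookkeeping is correct and gives $\dsn(G)\le s\cdot\Delta$.
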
	

\begin{proof}
	Edges of every stack of the $s$-stack layout of $G$ form an outerplanar graph. Since
	$G$ is bipartite, the edges of each stack can be partitioned into at most $\Delta$
	subgraphs, which are $1$-regular. Thus the dispersable stack number of $G$ is at most $s \cdot \Delta$.
\end{proof}	

\cref{cor:2} follows from \cref{thm:sn_product}, \cref{lm:disp}, and the fact that the stack number of
a graph with treewidth $tw$ is at most $tw + 1$~\cite{GH01}. Notice that in order to apply
\cref{thm:sn_product}, we set $G$ to be a given path and $H$ to be a given bipartite bounded-treewidth graph.
The bound of \cref{cor:2} can be reduced for low-treewidth graphs, whose dispersable stack number
is lower than the one given by \cref{lm:disp}~\cite{ABGKP18}.

\subsection{Negative Results}
In order to prove \cref{thm:sq_pw}, we first consider the case when $s=q=1$ and prove
the existence of a path decomposition of width $2$ with a certain property.

\begin{lemma}
	\label{lm:11_pw}
	Let $G$ be an $n$-vertex graph admitting a simultaneous
	$1$-stack $1$-queue layout with
	respect to a vertex order $\sigma = (v_1, v_2, \dots, v_n)$. 
	Then $G$ has pathwidth at most $2$. Furthermore, the corresponding path
	decomposition consists of $n$ bags $B_1, \dots, B_n$
	such that $|B_x| \le 3$ and $v_x \in B_x$ for all $1 \le x \le n$.
\end{lemma}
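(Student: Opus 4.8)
The plan is to build the promised path decomposition by hand, indexing the bags directly by positions in $\sigma$. The only structural fact I need is immediate from the hypothesis: since $\sigma=(v_1,\dots,v_n)$ supports a $1$-stack layout and a $1$-queue layout at the same time, no two edges cross in $\sigma$ and no two edges nest in $\sigma$ (equivalently, any two distinct edges either share an endpoint or have disjoint spans). Throughout I will simply quote ``no crossing'' and ``no nesting'' for well-chosen pairs of edges.

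For the construction, call $(v_i,v_k)$ with $i<k$ a \emph{forward edge} of $v_i$, and assign every edge an \emph{owner} endpoint by the rule: the edge with endpoints $v_j,v_k$ where $j<k$ is owned by $v_j$ if $v_j$ has at least two forward edges, and is owned by $v_k$ otherwise (that is, when this is the unique forward edge of $v_j$). The intuition is that a ``hub'' with several forward edges should reach out to cover all of them, while a vertex with only one forward edge stays put and is covered from the other side. For each $i$ let $I_i$ be the smallest interval of integers in $\{1,\dots,n\}$ containing $i$ together with every index $k$ such that $v_i$ owns the edge with endpoints $v_i$ and $v_k$, and set $B_x=\{v_i : x\in I_i\}$ for $1\le x\le n$.

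Validity is then routine. Since $i\in I_i$ we have $v_x\in B_x$; since each $I_i$ is an interval, the indices of the bags containing any fixed vertex form a contiguous block, which is the connectedness axiom of path decompositions. Every edge is covered: if the edge with endpoints $v_j,v_k$ ($j<k$) is owned by $v_j$, then $k\in I_j$ and $k\in I_k$, so both vertices lie in $B_k$; if it is owned by $v_k$, then $j\in I_k$ and $j\in I_j$, so both lie in $B_j$. Hence $(B_1,\dots,B_n)$ is a path decomposition, and everything reduces to proving $|B_x|\le 3$ for every $x$.

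The width bound is the heart of the matter, and the step I expect to be the main obstacle. Fix $x$. A vertex $v_j$ with $j<x$ lies in $B_x$ only if $v_j$ owns a forward edge reaching index $\ge x$, which by the rule forces $v_j$ to have at least two forward edges with farthest forward neighbour $v_b$ satisfying $b\ge x$. If there were two such vertices $v_{j_1},v_{j_2}$ with $j_1<j_2<x$, then letting $v_b,v_{b'}$ be the farthest forward neighbours of $v_{j_1},v_{j_2}$ and $v_a$ the second-farthest forward neighbour of $v_{j_2}$, a short case analysis on the order of $b,b'$ -- using $v_a$ in the tie case $b=b'$ -- produces in each case a crossing or a nesting pair among the edges $v_{j_1}v_b,\ v_{j_2}v_{b'},\ v_{j_2}v_a$, a contradiction. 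Symmetrically, a vertex $v_k$ with $k>x$ lies in $B_x$ only if some $v_i$ with $i\le x$ has $v_k$ as its unique forward neighbour; if there were two such $v_{k_1},v_{k_2}$ with $x<k_1<k_2$, with witnesses $v_{i_1},v_{i_2}$, then the edges $v_{i_1}v_{k_1}$ and $v_{i_2}v_{k_2}$ cross when $i_1<i_2$, nest when $i_1>i_2$, and force $v_{i_1}$ to have two forward edges when $i_1=i_2$ -- impossible in every case. Therefore $B_x$ consists of $v_x$, at most one vertex before it, and at most one vertex after it, so $|B_x|\le 3$ and $\pw(G)\le 2$. The only genuinely delicate points are pinning down the ownership rule exactly, the tie case $b=b'$, and remembering that the $i_1=i_2$ subcase needs ``unique forward edge'' rather than merely no crossing/nesting; the remainder is bookkeeping.
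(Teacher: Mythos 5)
Your proof is correct, and it takes a genuinely different route from the paper's. The paper proceeds by induction on $n$: it takes the last vertex $v_n$ and its leftmost neighbour $v_i$, uses the no-crossing/no-nesting hypothesis on the single edge $(v_i,v_n)$ to conclude that every vertex strictly between them is adjacent to $v_i$ or to $v_n$ (or isolated), recurses on $v_1,\dots,v_i$, and appends explicit bags in which $v_i$ covers the gap up to the first neighbour $v_h$ of $v_n$ and then $v_n$ covers the rest. You instead give a single global, non-recursive construction: the ownership rule assigns each vertex an explicit interval of bag indices, and the width bound reduces to two exchange arguments (at most one ``hub'' with two or more forward edges can reach a given position from the left; at most one unique-forward-target can be reached from the right). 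I checked your case analyses --- $b<b'$ gives a crossing, $b>b'$ a nesting, and $b=b'$ a nesting via the second-farthest neighbour $v_a$ since $j_1<j_2<a<b$; while $i_1<i_2$, $i_1>i_2$, $i_1=i_2$ give a crossing, a nesting, and a violation of uniqueness respectively --- and in every case the two edges involved are indeed independent, so the contradictions are genuine. Edge coverage and interval-connectedness go through, and your decomposition has exactly $n$ bags with $v_x\in B_x$, which is the extra property that the proof of \cref{thm:sq_pw} relies on. The paper's induction is shorter to state but leaves the final verification as ``straightforward''; your version pays with a more delicate ownership rule but yields a closed-form description of every bag --- essentially what the paper's recursion unwinds to, with the left-reaching role played by vertices having at least two forward edges rather than by ``leftmost neighbour of the current last vertex''.
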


\begin{proof}
	It is tempting to approach the lemma by arguing that for a vertex order, $\sigma$, the
	corresponding vertex separation number is bounded. However, a simultaneous $1$-stack $1$-queue layout of a star graph
	with its center at position $\lfloor n/2 \rfloor$ of $\sigma$ has an unbounded vertex cut. Therefore, we
	explicitly construct a path decomposition of $G$ to prove the claim.
	We use induction on the number of vertices in $G$; the base of the induction with $n=1$ clearly holds.
	
	Consider the last vertex in the vertex order, $v_n \in V(G)$, and let $d \ge 0$ be the degree
	of $v_n$. If $d=0$ then we inductively construct a path decomposition for the first $n-1$ vertices
	and append a bag containing the single vertex, $\{v_n\}$. Thus we may assume $d > 0$.
	
	Let $v_i \in V(G)$ be the smallest (with respect to $\sigma$) neighbor of $v_n$ for some $1 \le i < n$.
	Since $\sigma$ corresponds to a simultaneous
	$1$-stack $1$-queue layout, no edges of $G$ cross each other and no edges of $G$ nest each other.
	Thus, every vertex $x \in V(G)$ with $v_i <_{\sigma} x <_{\sigma} v_n$ is either
	(a)~adjacent to~$v_i$, or 
	(b)~adjacent to~$v_n$, or
	(c)~adjacent to both $v_i$ and $v_n$, or
	(d)~an isolated vertex. Otherwise edge $(v_i, v_n)$ crosses or nests	
	an edge $(x, y)$ for some $y \in V(G) \setminus \{v_i, v_n\}$; see \cref{fig:sq}.
	
	In order to build a desired path decomposition, we inductively apply the argument to a subgraph
	of $G$ induced by the vertices $v_1, v_2, \dots, v_i$.
	Assume that the resulting path decomposition of the subgraph consists of bags $\{B_j : 1\le j \le i\}$.	
	We extend it to a path decomposition of $G$ by appending $n-i$ bags.	
	Namely, if $d > 1$ then the extension is
	$$\{v_i, v_{i+1}\}, \dots, \{v_i, v_{h-1}\}, \{v_i, v_h, v_n\}, \{v_{h+1}, v_n\}, \dots, \{v_{n-1}, v_n\}, \{v_n\},$$
	where $v_h, i < h < n$ is the first neighbor of $v_n$ after $v_i$ in $\sigma$.
	Otherwise if $d = 1$ then we use bags
	$$\{v_i, v_{i+1}\}, \dots, \{v_i, v_{n-1}\}, \{v_i, v_n\}.$$
	
	It is straightforward to verify that the constructed path decomposition of $G$ satisfies the 
	requirements of the lemma.
\end{proof}

\begin{figure}[!tb]
	\centering
	\includegraphics[page=1,width=0.5\textwidth]{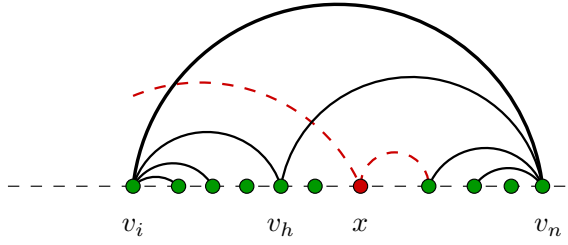}
	\caption{A graph admits a simultaneous $1$-stack $1$-queue layout if and only if
		its pathwidth is at most $2$, since a non-neighbor of $v_i$ and $v_n$, $x$, between the two vertices 
		creates either a crossing or a nested edge.}
	\label{fig:sq}
\end{figure}

\begin{backInTimeThm}{thm-pw}
	\begin{theorem}
		Let $G$ be a graph admitting a simultaneous $s$-stack $q$-queue layout. 
		Then $G$ has pathwidth at most $2 \cdot s \cdot q$.
	\end{theorem}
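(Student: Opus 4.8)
The plan is to reduce the general case to the $s=q=1$ case handled by \cref{lm:11_pw}. Given a simultaneous $s$-stack $q$-queue layout of $G$ with vertex order $\sigma$, the edge set $E(G)$ is partitioned into sets $E_{a,b}$ for $1 \le a \le s$, $1 \le b \le q$, where $E_{a,b}$ consists of the edges lying in stack $a$ and queue $b$. For each fixed pair $(a,b)$, let $G_{a,b}$ be the spanning subgraph of $G$ with edge set $E_{a,b}$. The restriction of $\sigma$ to $V(G)$ is then simultaneously a $1$-stack layout and a $1$-queue layout of $G_{a,b}$ (no two edges of a single stack cross, no two edges of a single queue nest — and here there is only one of each). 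So \cref{lm:11_pw} applies to $G_{a,b}$, yielding a path decomposition $(B^{a,b}_1, \dots, B^{a,b}_n)$ with $|B^{a,b}_x| \le 3$ and $v_x \in B^{a,b}_x$ for every $x$, where $(v_1, \dots, v_n) = \sigma$.

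Next I would superimpose these $s \cdot q$ path decompositions along the common index set $\{1, \dots, n\}$. Define $B_x := \bigcup_{a,b} B^{a,b}_x$ for each $1 \le x \le n$. I claim $(B_1, \dots, B_n)$ is a path decomposition of $G$. The edge-coverage condition is immediate: every edge of $G$ lies in some $E_{a,b}$, hence is covered by some bag $B^{a,b}_x \subseteq B_x$. For the connectivity (interval) condition, fix a vertex $v$; in each decomposition $(B^{a,b}_\ast)$ the indices $x$ with $v \in B^{a,b}_x$ form a contiguous interval $I_{a,b}$, and crucially $x \in I_{a,b}$ whenever $v_x = v$ (since $v_x \in B^{a,b}_x$), so every interval $I_{a,b}$ contains the index of $v$ itself; a union of intervals all sharing a common point is again an interval, so $\{x : v \in B_x\}$ is contiguous. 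For the width bound, $|B_x| \le \sum_{a,b} |B^{a,b}_x| \le 3sq$. This gives pathwidth at most $3sq - 1$.

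The bound $3sq-1$ is weaker than the claimed $2sq$, so the last step is to tighten the bag-size accounting. The key observation is that the vertex $v_x$ itself lies in $B^{a,b}_x$ for \emph{every} pair $(a,b)$, so in the union $B_x$ it is counted only once rather than $sq$ times. Each $B^{a,b}_x$ contributes at most two vertices \emph{other than} $v_x$; hence $|B_x| \le 1 + 2sq$. To shave off the final $+1$, I would revisit \cref{lm:11_pw}: in the extension bags constructed there, every bag of size $3$ has the form $\{v_i, v_h, v_n\}$ and is the unique ``transition'' bag, while all other bags have size at most $2$; more usefully, one can arrange (or directly read off from the construction) that the bag $B^{a,b}_x$ always has the form $\{v_x\} \cup S^{a,b}_x$ with $|S^{a,b}_x| \le 2$, and then argue that across all $(a,b)$ the contributions can be amortized so that $|B_x| \le 2sq$ — for instance by associating each extra vertex in $B_x$ with an edge of $G$ of a distinct type incident to it and using a counting argument on the at most $sq$ edge-types. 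The main obstacle is precisely this constant-factor optimization: getting from the clean superposition bound $3sq-1$ down to $2sq$ requires a careful structural description of the bags produced by \cref{lm:11_pw} (distinguishing the at-most-one size-$3$ bag per decomposition, and tracking which vertex is ``new'' at each index), rather than any new idea. Everything else — the decomposition into single-stack-single-queue subgraphs and the interval-union argument — is routine once \cref{lm:11_pw} is in hand.
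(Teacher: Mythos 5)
Your decomposition into the $sq$ single-stack--single-queue spanning subgraphs, the application of \cref{lm:11_pw} to each, and the superposition $B_x = \bigcup_{a,b} B^{a,b}_x$ with the interval-union argument anchored at the common index $x$ is exactly the paper's proof. The one thing to fix is your final paragraph: there is no remaining gap. Once you have $|B_x| \le 1 + 2sq$ (the anchor vertex $v_x$ counted once, plus at most two other vertices from each of the $sq$ bags), you are done, because the width of a path decomposition is the maximum bag size \emph{minus one}, so $|B_x| \le 2sq + 1$ already gives pathwidth at most $2sq$. You applied the ``minus one'' correctly when you converted $|B_x| \le 3sq$ into pathwidth $3sq-1$, but then dropped it when comparing $1+2sq$ against the target $2sq$; the entire proposed amortization over edge types, which you label the main obstacle, is unnecessary.
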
	
\end{backInTimeThm}

\begin{proof}
	Consider a vertex order, $\sigma$, corresponding to 
	the simultaneous $s$-stack $q$-queue layout. Every edge of $G$ belongs to a stack and to a queue
	of the simultaneous layout. Thus all the edges can be partitioned into $s \cdot q$ disjoint sets,
	denoted $E^{i,j} \subseteq E(G)$ with $1 \le i \le s, 1 \le j \le q$, such
	that each set induces a simultaneous $1$-stack $1$-queue layout with vertex order $\sigma$.
	By \cref{lm:11_pw}, for every (possibly disconnected) subgraph $G^{i,j} = (V(G), E^{i,j})$ of $G$, there exists a 
	path decomposition of width~$2$ whose bags are denoted by $B_x^{i, j}$ for $x \in V(G)$.
	Define a path decomposition of $G$ to be $\{\cup_{i,j} B_x^{i,j} : x \in V(G)\}$, where
	the union is taken over all $1 \le i \le s, 1 \le j \le q$.	
	Next we verify that the construction is indeed a path decomposition of $G$:
	\begin{compactitem}
		\item Every edge of $G$ belongs to some set of the edge partition; thus, there is a bag 
		in the corresponding path decomposition, which contains both endpoints of the edge.
		
		\item For every $1 \le i \le s, 1 \le j \le q$, a vertex $x \in V(G)$ is in 
		a continuous interval of bags of the path decomposition of $G^{i,j}$.		
		By \cref{lm:11_pw}, the interval contains bag $B_x^{i,j}$; therefore, the union of 
		such intervals taken over all path decompositions forms a continuous interval.
	\end{compactitem}	
	
	Finally we notice that for all $1 \le i \le s, 1 \le j \le q$, bag $B_x^{i,j}, x \in V(G)$ 
	consists of vertex $x$ and possibly
	two more vertices of $G$. Hence, $|\cup_{i,j} B_x^{i,j}| \le 2 \cdot s \cdot q + 1$. 
	That is, the width of the constructed path decomposition
	is at most $2 \cdot s \cdot q$.
\end{proof}	

Observe that the bound of the above theorem may not be tight when $s > 1$ or $q > 1$. It is even
possible that for every graph $G$, $\pw(G)$ is linear in $(s + q)$.

Next our goal is to prove \cref{thm:separated}. To this end, we use an observation
by Erd\H{o}s and Szekeres~\cite{ES35} that 
for all $a, b \in \mathbb{N}$, every sequence of distinct numbers of length $a\cdot b + 1$ 
contains a monotonically increasing subsequence of length $a+1$ or 
a monotonically decreasing subsequence of length $b+1$. We start with an auxiliary lemma.

\begin{lemma}
	\label{lm:sep}
	Let $G$ be a graph with $2n$ vertices and $n$ independent edges $(u_i, v_i)$, $1 \le i \le n$.
	If $G$ admits an $s$-stack layout in which $u_1 < u_2 < \dots < u_n < v_i$ for all $1 \le i \le n$, then
	there exists a subgraph of $G$ with at least $r = \lceil n/s \rceil$ edges such that
	$u_{j_1} < u_{j_2} < \dots < u_{j_r} < v_{j_r} < \dots < v_{j_2} < v_{j_1}$ for some $1 \le j_1 < \dots < j_r \le n$.
\end{lemma}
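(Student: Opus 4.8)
The plan is to combine a pigeonhole argument over the $s$ stacks with the observation that, under the prescribed vertex order, two non-crossing edges of the matching are forced to be nested.

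First I would record the shape of the vertex order of the $s$-stack layout: the hypothesis $u_1 < u_2 < \dots < u_n < v_i$ (for all $i$) says exactly that all left endpoints appear, in index order, before all right endpoints, with the $v_i$'s occurring in some arbitrary order among themselves. Next I partition the $n$ matching edges $(u_i, v_i)$ according to the stack to which the layout assigns them; since there are only $s$ stacks, some stack $\Sh$ contains at least $\lceil n/s \rceil = r$ of them. Pick $r$ such edges, say $(u_{j_1}, v_{j_1}), \dots, (u_{j_r}, v_{j_r})$ with $j_1 < j_2 < \dots < j_r$. Because the $u$-vertices occur in index order, this already gives $u_{j_1} <_\sigma u_{j_2} <_\sigma \dots <_\sigma u_{j_r}$.

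The heart of the argument is to examine any two of these edges, $(u_{j_a}, v_{j_a})$ and $(u_{j_b}, v_{j_b})$ with $a < b$. Since every $u$-vertex precedes every $v$-vertex, we have $u_{j_a} <_\sigma u_{j_b}$ and $u_{j_b}$ precedes both $v_{j_a}$ and $v_{j_b}$; hence the two edges are either crossing (if $v_{j_a} <_\sigma v_{j_b}$) or nested (if $v_{j_b} <_\sigma v_{j_a}$), independence of the matching ruling out shared endpoints. As both edges lie in the common stack $\Sh$, crossing is impossible, so $v_{j_b} <_\sigma v_{j_a}$. Applying this to every pair $a < b$ yields $v_{j_r} <_\sigma \dots <_\sigma v_{j_2} <_\sigma v_{j_1}$, and combining with $u_{j_1} <_\sigma \dots <_\sigma u_{j_r} <_\sigma v_{j_r}$ shows that the $r$ edges $(u_{j_1}, v_{j_1}), \dots, (u_{j_r}, v_{j_r})$ form an $r$-rainbow, which is precisely the subgraph required.

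I do not expect a genuine obstacle here: the one thing to get right is the equivalence ``non-crossing $\Leftrightarrow$ nested'' for a matching whose left endpoints all precede its right endpoints, after which pigeonhole finishes the proof. (In particular this auxiliary lemma does not itself invoke the Erd\H{o}s--Szekeres observation, which will instead be used when deducing \cref{thm:separated}.)
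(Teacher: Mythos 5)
Your proof is correct, and it reaches the stated bound $r = \lceil n/s \rceil$ by a route that is genuinely different from (and somewhat more elementary than) the paper's. The paper writes the suffix of the vertex order as $v_{h_1} <_\sigma \dots <_\sigma v_{h_n}$ for a permutation $h$, observes that an increasing subsequence of $h$ of length $k$ is a $k$-twist (so $k \le s$), and then invokes the Erd\H{o}s--Szekeres observation with $a=s$, $b=\lceil n/s\rceil - 1$ to extract a decreasing subsequence of length $\lceil n/s\rceil$, i.e.\ the desired rainbow. You instead pigeonhole the $n$ matching edges over the $s$ stacks to find $r$ of them in a common stack, and then use the key structural fact that when all left endpoints precede all right endpoints, two independent edges either cross or nest, so same-stack edges are forced to pairwise nest and hence form an $r$-rainbow. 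The two arguments are dual faces of the same arithmetic ($s(\lceil n/s\rceil - 1) < n$): the paper bounds the longest twist and deduces a long rainbow via Erd\H{o}s--Szekeres, while you directly exploit the stack partition as a covering by twist-free (hence, here, chain-nested) sets. Your version avoids the extremal lemma entirely and is self-contained; the paper's version only needs the weaker hypothesis that the order admits no $(s+1)$-twist, rather than an explicit stack assignment, though both are available from the lemma's hypotheses. Your concluding remark that Erd\H{o}s--Szekeres is not needed for this auxiliary step is accurate for your argument, but note the paper deliberately routes it through that observation, which it introduces just before the lemma.
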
	

\begin{proof}
	Assume that the $s$-stack layout of $G$ is defined by the vertex order $\sigma$:
	$$u_1 <_{\sigma} u_2 <_{\sigma} \dots <_{\sigma} u_n <_{\sigma} v_{h_1} <_{\sigma} v_{h_2} <_{\sigma} \dots <_{\sigma} v_{h_n},$$
	where $h_1, h_2, \dots, h_n$ is a permutation of $\{1, 2, \dots, n\}$.
	An increasing subsequence of $h_1, h_2, \dots, h_n$ with length $k > 0$ corresponds to
	a $k$-twist in the stack layout of $G$. Hence, $k \le s$.	
	Now we can apply the result of Erd\H{o}s and Szekeres~\cite{ES35} for $a = s$ and $b = \lceil n/s \rceil - 1$,
	as $s (\lceil n/s \rceil - 1) + 1 \le n$ for all integers $n, s \ge 1$.
	Thus, the permutation contains a decreasing subsequence of length at least 
	$b + 1 = \lceil n/s \rceil$, which completes the proof.
\end{proof}	

Next we prove \cref{thm:separated}.
Recall that a stack layout of $P_n \boxtimes G$ is \df{separated}
if for two consecutive copies of $G$, denoted $G_1$ and $G_2$, all vertices of $G_1$ precede all vertices of $G_2$ in
the vertex order.

\begin{backInTimeThm}{thm-separated}
	\begin{theorem}
		Assume $P_n \boxtimes G$ has a separated layout on $s$ stacks. Then $G$ admits 
		a simultaneous $s$-stack $s^2$-queue layout, and therefore, $\pw(G) \le 2 s^3$.
	\end{theorem}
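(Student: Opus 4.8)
The plan is to establish the first claim --- that $G$ admits a simultaneous $s$-stack $s^2$-queue layout --- and then read off $\pw(G)\le 2s^3$ from \cref{thm:sq_pw} applied with $q=s^2$. Both the stack and the queue layout of $G$ will use the \emph{same} vertex order, namely the one obtained by restricting the given layout to one of the two distinguished copies of $G$.

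Fix adjacent vertices $p_1,p_2$ of $P_n$ such that the copies $G_1,G_2$ of $G$ over them witness that the layout is separated. The subgraph of $P_n\boxtimes G$ induced on $V(G_1)\cup V(G_2)$ is exactly $P_2\boxtimes G$, and the given separated $s$-stack layout restricts to an $s$-stack layout $\sigma$ of $P_2\boxtimes G$ in which every vertex of $G_1$ precedes every vertex of $G_2$. Writing $\sigma_1,\sigma_2$ for the restrictions of $\sigma$ to $V(G_1),V(G_2)$, viewed as vertex orders of $G$ via $G_1\cong G\cong G_2$, the edges inside $G_1$ form a copy of $E(G)$ laid out by $\sigma_1$ within the $s$ stacks of $\sigma$; hence $\sigma_1$ is an $s$-stack layout of $G$. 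It remains to prove that $\sigma_1$ is also an $s^2$-queue layout of $G$, for which, by the characterization recalled in \cref{sect:prel}, it is enough that $\sigma_1$ contain no $(s^2+1)$-rainbow.

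Suppose, for contradiction, that $(x_t,y_t)\in E(G)$ for $t=1,\dots,s^2+1$ satisfy $x_1<_{\sigma_1}\cdots<_{\sigma_1}x_{s^2+1}<_{\sigma_1}y_{s^2+1}<_{\sigma_1}\cdots<_{\sigma_1}y_1$. Apply \cref{lm:sep} to the $s^2+1$ independent ``vertical'' edges $(p_1,x_t)(p_2,x_t)$ of $P_2\boxtimes G$: their $G_1$-endpoints appear in $\sigma$ in the order $(p_1,x_1),\dots,(p_1,x_{s^2+1})$ and all precede their $G_2$-endpoints, so the lemma (with $n=s^2+1$) yields indices $j_1<\cdots<j_r$ with $r=\lceil(s^2+1)/s\rceil=s+1$ and $(p_2,x_{j_{s+1}})<_{\sigma}\cdots<_{\sigma}(p_2,x_{j_1})$, that is $x_{j_{s+1}}<_{\sigma_2}\cdots<_{\sigma_2}x_{j_1}$. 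Now take the $s+1$ direct edges $f_i:=(p_1,y_{j_i})(p_2,x_{j_i})$ of $P_2\boxtimes G$. Since $y_{j_{s+1}}<_{\sigma_1}\cdots<_{\sigma_1}y_{j_1}$ by the rainbow, their $G_1$-endpoints satisfy $(p_1,y_{j_{s+1}})<_{\sigma}\cdots<_{\sigma}(p_1,y_{j_1})$; their $G_2$-endpoints satisfy $(p_2,x_{j_{s+1}})<_{\sigma}\cdots<_{\sigma}(p_2,x_{j_1})$; and every $G_1$-endpoint precedes every $G_2$-endpoint. Consequently $f_{s+1},f_s,\dots,f_1$ --- ordered consistently by either their left or their right endpoints --- form an $(s+1)$-twist in $\sigma$, which an $s$-stack layout cannot contain. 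This contradiction shows $\sigma_1$ has no $(s^2+1)$-rainbow, so $G$ has a simultaneous $s$-stack $s^2$-queue layout, and \cref{thm:sq_pw} gives $\pw(G)\le 2s\cdot s^2=2s^3$.

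The step I expect to be the real obstacle is selecting the right auxiliary edges for \cref{lm:sep}. Running it on the ``diagonal'' edges $(p_1,x_t)(p_2,y_t)$ only pins down the $\sigma_2$-order of the $y$'s and leaves the $x$'s free, so the surviving diagonal edges never close up into a twist; it is precisely the vertical edges sitting over the left endpoints of the rainbow whose extracted sub-nesting reverses the $\sigma_2$-order of those vertices, and that reversal is exactly what turns the corresponding diagonal edges into a forbidden twist. The rest is routine bookkeeping: checking that each edge invoked is genuinely an edge of $P_2\boxtimes G$, that the endpoint sets are pairwise distinct so the twists are honest twists of independent edges, and that the hypotheses of \cref{lm:sep} (the prescribed left endpoints occurring in order and before all right endpoints) hold for the restriction of $\sigma$.
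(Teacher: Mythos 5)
Your proof is correct and follows essentially the same route as the paper's: restrict to the two separated copies, assume an $(s^2+1)$-rainbow in $\sigma_1$, apply \cref{lm:sep} to the vertical edges over the left endpoints of the rainbow to extract an $(s+1)$-nested family, and then observe that the corresponding diagonal (direct) edges form an $(s+1)$-twist, contradicting the $s$-stack hypothesis. The only differences are cosmetic (you work with $k=s^2+1$ exactly rather than $k>s^2$, and you spell out the Erd\H{o}s--Szekeres bookkeeping slightly more explicitly).
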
	
\end{backInTimeThm}

\begin{figure}[!tb]
	\centering
	\includegraphics[page=2,width=0.75\textwidth]{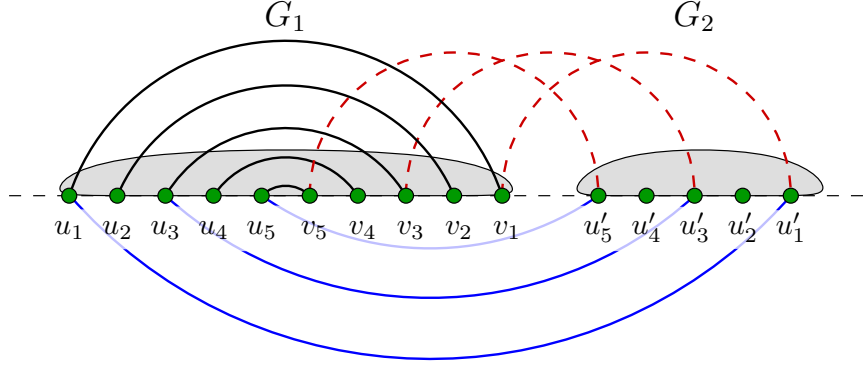}
	\caption{A large rainbow in a copy $G_1$ of a separated layout of 
		$P_n \boxtimes G$ yields a large twist formed by inter-copy edges between $G_1$ and $G_2$.}
	\label{fig:sep}
\end{figure}

\begin{proof}
	Suppose that $\sigma$ is a vertex order corresponding to the separated layout of $P_n \boxtimes G$, 
	and $G_1 = \big(V(G_1), E(G_1)\big), G_2 = \big(V(G_2), E(G_2)\big)$ are two copies of $G$ separated in $\sigma$.
	That is, $u <_{\sigma} v$ for all $u \in V(G_1), v \in V(G_2)$.
	
	Consider a suborder of $\sigma$ induced by the vertices of $V(G_1)$ and denote it by $\sigma_1$.
	If the largest rainbow formed by the edges of $E(G_1)$ with respect to $\sigma_1$ has the size at
	most $s^2$, then $\sigma_1$ corresponds to an $s^2$-queue layout~\cite{HR92}.
	In that case, $G$ admits a simultaneous $s$-stack $s^2$-queue layout using $\sigma_1$ as the vertex order, 
	which proves the claim of the theorem. Therefore, we may assume that the largest rainbow in $G_1$ is
	of size $k > s^2$; see \cref{fig:sep}.
	
	Let $(u_i, v_i) \in E(G_1), 1 \le i \le k$ be such a $k$-rainbow with 
	$$u_1 <_{\sigma} u_2 <_{\sigma} \dots <_{\sigma} u_k <_{\sigma} v_k <_{\sigma} \dots <_{\sigma} v_2 <_{\sigma} v_1.$$
	Consider vertices $u_1', \dots u_k' \in V(G_2)$ that are corresponding copies of $u_1, \dots u_k$ in graph $G_2$.
	Since the stack layout of $P_n \boxtimes G$ is separated, we have $u_1 <_{\sigma} \dots <_{\sigma} u_k <_{\sigma} u_i'$ for all
	$1 \le i \le k$. Hence, we may apply \cref{lm:sep} for a graph induced by vertices
	$u_1, \dots, u_k, u_1', \dots, u_k'$, which are connected by $k$ independent edges in the strong product.
	Therefore, we find a subset of $r = \lceil k/s \rceil > s$ edges in the graph such that
	$$u_{j_1} <_{\sigma} u_{j_2} <_{\sigma} \dots <_{\sigma} u_{j_r} <_{\sigma} u_{j_r}' <_{\sigma} \dots <_{\sigma} u_{j_2}' <_{\sigma} u_{j_1}',$$
	where $1 \le j_1, \dots, j_r \le k$.
	Finally, we observe that $(v_i, u_i'), 1 \le i \le k$ are direct-edges in the strong product, and vertices
	$$v_{j_r} <_{\sigma} v_{j_{r-1}} <_{\sigma} \dots <_{\sigma} v_{j_1} <_{\sigma} u_{j_r}' <_{\sigma} \dots <_{\sigma} u_{j_2}' <_{\sigma} u_{j_1}',$$
	form an $r$-twist in the $s$-stack layout of $G$. This contradicts to our assumption that 
	the largest rainbow in $G_1$ is of size greater than $s^2$.
	
	The bound on the pathwidth of $G$ follows from \cref{thm:sq_pw}.	
\end{proof}	

\section{Related Work}
\label{sect:related}

Although there exists numerous works on stack and queue layouts of graphs, 
layouts of graph products received much less attention. 
Wood~\cite{W05} considers queue layouts of various graph products, and
shows that the queue number of a product of graphs $H$ and $G$ is bounded by a function of the 
\df{strict} queue number of $H$ and the queue number of $G$. Here
a queue layout with an order $\sigma$ is \df{strict} if for no pair of edges, $(u, v)$ and $(x, y)$, it holds
that $u \le_{\sigma} x <_{\sigma} y \le_{\sigma} v$; see \cref{fig:strict}. 
Specifically, it is shown
that for all $H$ and $G$, $\qn(H \boxtimes G) \le 2 \sqn(H) \cdot \qn(G) + \sqn(H) + \qn(G)$,
where $\sqn(H)$ is the strict queue number of $H$. 
Similar bounds are given for the cartesian and direct products of $H$ and $G$.
It follows that $\qn(P_n \boxtimes G) \le 3 \qn(G) + 1$.
We stress that the result combined with a decomposition theorem for planar graphs~\cite{PS19,DJMMUW19} 
(such as one given by \cref{lm:kplanar_decom}) and the fact that the queue number of
planar 3-trees is bounded by a constant~\cite{ABGKP20}, yield a constant upper bound on the 
queue number of planar graphs.

\begin{figure}[!tb]
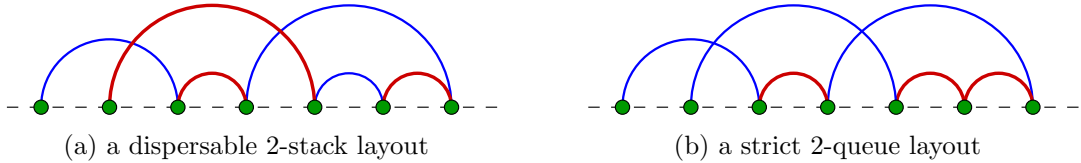

	\begin{subfigure}[b]{0.49\textwidth}
		\centering
		\includegraphics[page=3,width=0.9\textwidth]{sq}
		\caption{a dispersable $2$-stack layout}
		\label{fig:matching}
	\end{subfigure}
	\hfill
	\begin{subfigure}[b]{0.49\linewidth}
		\centering
		\includegraphics[page=4,width=0.9\textwidth]{sq}
		\caption{a strict $2$-queue layout}
		\label{fig:strict}
	\end{subfigure}
	\caption{Examples of a dispersable stack layout and a strict queue layout.}
\end{figure}

Stack layouts of graph products have also been studied~\cite{BK79,Over98,Che16,JSL18}, though most of the results 
are less complete as the problem is notoriously more difficult.
Bernhart and Kainen~\cite{BK79} introduce the concept of \df{dispersable} (also known as \df{matching}) 
book embeddings in which the graphs induced by the edges of each page are $1$-regular; see \cref{fig:matching}.
The minimum number of pages needed in a dispersable book embedding of $G$ is called its 
\df{dispersable stack number}, denoted $\dsn(G)$; it is also known as \df{matching book thickness}~\cite{ABGKP18}.
Clearly for every graph $G$ of maximum vertex degree $\Delta$, we have $\dsn(G) \ge \Delta$.
The authors of~\cite{BK79} observed that for every path, every tree,
every cycle of an even length, every complete bipartite graph, 
and every binary hypercube, it holds that $\dsn(G) = \Delta$. That made them conjecture that
the equation holds for every regular bipartite graph. The conjecture was disproved
in 2018 for every $\Delta \ge 3$ but it was shown that
$\dsn(G) = \Delta$ for every 3-connected 3-regular bipartite planar graph~\cite{ABGKP18}.

Bernhart and Kainen~\cite{BK79} show that for a bipartite graph $H$ and all 
graphs $G$, it holds that $\sn(H \cbox G) \le \dsn(H) + \sn(G)$; see~\cite{Over98}
for an alternative proof. Our \cref{thm:separated} generalizes the result.
Several subsequent papers study book embeddings of cartesian products 
for special classes of graphs~\cite{Li02,Che16,JSL18}; for example, when $H$ is a path and $G$ is a tree.
However to the best of our knowledge, no results on stack layouts of direct and strong products of graphs have
been published.

We remark that very recently, Dujmovi{\'c}, Morin, and Yelle~\cite{DMY20} independently proved
a result equivalent to \cref{cor:1}. Specifically, they study stack layouts of graphs with bounded
\emph{layered} pathwidth. That is, a path decomposition with a layering of a graph (a mapping 
$\ell: V(G) \rightarrow \mathbb{Z}$ such that $|\ell(u) - \ell(v)| \le 1$ for all $(u, v) \in E(G)$)
in which the size of the intersection of a bag and a layer is bounded by a constant.
It is shown that every graph of \emph{layered} pathwidth $p$ has stack number at most $4p$.
Since the strong product of a path and a pathwidth-$p$ graph has layered pathwidth $p+1$, the
result of~\cite{DMY20} implies (asymptotically) \cref{cor:1}. We emphasize that neither
our work nor \cite{DMY20} provides a tight bound on the stack number of the class of graphs.

\section{Conclusion}
\label{sect:concl}
In this paper we initiated the study of book embeddings of strong graph products. As explained in \cref{sect:intro}, 
resolving \cref{open:stack_strong_product} would either provide a constant upper bound on the stack number of
several families of non-planar graphs, or it would answer a fundamental question of Heath et al.~\cite{HLR92}
on the relationship of stack and queue layouts.

\cref{thm:separated} indicates that solving the open problem might be a challenging task. 
Thus we suggest to explore the problem for natural subclasses of bounded-treewidth graphs.

\begin{open}
	\label{open:stack_strong_product2}
	Is stack number of $P_n \boxtimes G$ bounded by a constant when $G$ is
	\begin{compactenum}[(i)]
		\item a tree (having an unbounded maximum degree)?
		\item an outerplanar (1-stack) graph?
		\item a planar graph with a constant treewidth, $\tw(G) \ge 2$?
		\item a bipartite graph with a constant treewidth, $\tw(G) \ge 2$?
	\end{compactenum}	
\end{open}

Notice that by the result of Wood~\cite{W05}, the queue number of $P_n \boxtimes G$ is a constant
for all the aforementioned graph families.

\bibliographystyle{abbrv}
\bibliography{refs_products}

\end{document}